\newcommand{\qh}{{\bf h}}
\newcommand{\qw}{{\bf w}}
\newcommand{\qx}{{\bf x}}
\newcommand{\qA}{{\bf A}}
\newcommand{\qD}{{\bf D}}
\newcommand{\qH}{{\bf H}}
\newcommand{\qI}{{\bf I}}
\newcommand{\qQ}{{\bf Q}}
\newcommand{\qU}{{\bf U}}
\newcommand{\qW}{{\bf W}}
\newcommand{\qX}{{\bf X}}
\newcommand{\qzero}{{\bf 0}}
\newcommand{\tr}{\mbox{trace}}
\newtheorem{theorem}{Theorem}
\newtheorem{lemma}{Lemma}
\newcommand{\be}{\begin{equation}} \newcommand{\ee}{\end{equation}}
\newcommand{\bea}{\begin{eqnarray}} \newcommand{\eea}{\end{eqnarray}}
\begin{document}

\title{Specific Absorption Rate-Aware  Beamforming   in MISO Downlink SWIPT Systems}%
\author{Juping Zhang,
        Gan Zheng,~\IEEEmembership{Senior~Member,~IEEE,}
         Ioannis Krikidis,~\IEEEmembership{Fellow,~IEEE,}
       and Rui Zhang,~\IEEEmembership{Fellow,~IEEE}
  \thanks{J. Zhang,  and G. Zheng are  with the Wolfson School of Mechanical, Electrical and Manufacturing Engineering, Loughborough University, Leicestershire, LE11 3TU, UK (Email: \{j.zhang3, g.zheng\}@lboro.ac.uk).}
  \thanks{I. Krikidis is with the    Department of Electrical and Computer Engineering,  University of Cyprus, 1678 Nicosia Cyprus. (E-mail: krikidis@ucy.ac.cy).}
 \thanks{R. Zhang is with the Department of Electrical and Computer Engineering, National University of Singapore. (Email:elezhang@nus.edu.sg).}
}

 \maketitle

\begin{abstract}
This paper investigates the optimal transmit beamforming design of simultaneous wireless information and power transfer (SWIPT)  in the multiuser multiple-input-single-output (MISO) downlink with specific absorption rate (SAR) constraints. We consider the power splitting  technique for SWIPT, where each receiver divides the received signal into two parts: one for information decoding and the other for energy harvesting with a practical non-linear  rectification model. The problem of interest is to maximize as much as possible the received signal-to-interference-plus-noise ratio (SINR) and the energy harvested for all receivers, while satisfying the   transmit power and the SAR constraints by optimizing the transmit beamforming at the transmitter and the power splitting ratios at different receivers. The optimal beamforming and power splitting solutions are obtained with the aid of  semidefinite programming and bisection search. Low-complexity fixed beamforming and hybrid beamforming techniques are also studied. Furthermore, we study the effect of imperfect channel information and radiation matrices, and design robust beamforming to guarantee the worst-case  performance. Simulation results demonstrate that our proposed algorithms can effectively deal with the radio exposure constraints and significantly outperform  the conventional transmission scheme with power backoff.
\end{abstract}

\begin{IEEEkeywords}
Wireless power transfer, SWIPT, specific absorption rate, MU MISO, beamforming, optimization.
\end{IEEEkeywords}

\section{Introduction}

Simultaneous wireless information and power transfer (SWIPT) is a new technology where information and energy   flows co-exist, co-engineered to simultaneously provide communication connectivity and energy sustainability \cite{Ioannis-ComMag-14, CLE}.  It has been considered as a new promising solution to transmit information and energy to low power devices and to extend the battery lifetime of wireless networks, especially in wireless sensor networks and Internet of Things (IoT) applications. Compared to the traditional energy harvesting (EH) and green communication techniques, which collect energy from natural and man-made sources such as solar, wind or mechanical vibration, SWIPT can be fully controlled and optimized by harvesting energy from the radio-frequency (RF) signals. From the seminal work of Varshney \cite{VAR}, who introduced the concept of SWIPT and the fundamental trade-off between information and energy transfer (i.e., information-energy capacity region), substantial  works appear in the literature that study SWIPT from  different perspectives.

Initial works in SWIPT assume a linear channel to transfer both information and power and investigate sophisticated transmission techniques and/or receiver architectures for SWIPT.  The work in \cite{PER} characterizes the fundamental trade-off between information and energy for a basic multiple-access channel with information and EH receivers; it has been shown that a feedback channel is a critical mechanism to increase the information-energy capacity region.  To enable practical   techniques to convey information and power, the authors in \cite{ZHA} propose two practical receiver approaches, namely, a) ``time switching (TS)'', where the receiver switches between decoding information and harvesting energy,  and b) ``power splitting (PS)'', where the receiver splits the received signal into two parts for decoding information and harvesting energy, respectively. Using the PS approach and multiple transmit antennas at the transmitter, the optimal multiple-input single-output (MISO) beamforming with both quality-of-service (QoS) and EH constraints was studied in \cite{Stelios-TWC-14} and  \cite{Zhang-14} for an interference channel and a downlink MISO channel, respectively. To reduce the complexity of the SWIPT receivers, some studies overcome the use of RF-to-DC circuits  and achieve information transfer by embedding information into the shape of the transmitted signal rather than modulating radio waveforms. The work in \cite{RUI1} introduces the integrated receiver, where information is embedded in the amplitude of energy signals,  while decoding is performed by taking samples at the output of the rectification circuit. The authors in \cite{ZHA2} also exploit the concept of pre-coded spatial modulation and convey information both at the energy pattern of the transmitted signal and the index of the received antenna. More recent works take into account the non-linearity of the rectification process and focus on the waveform design for SWIPT. By introducing a simple and tractable model of the diode nonlinearity (i.e., physics-based diode model), the work in \cite{CLE3} deals with the design of waveforms that maximizes the DC power at the output of the rectifier. This work is extended in \cite{CLE4} to convey both information and energy by transmitting a superposition of unmodulated and modulated multi-tone signals.

On the other hand, wireless technologies are characterized by terminals that are subject to strict regulations on the level of RF radiation that users of the terminals are exposed to. Since RF radiation has been proven harmful for humans and the environment, these regulations minimize the potential biological effects (e.g., tissue heating, metabolic changes in the brain, carcinogenic effects) caused by RF radiation \cite{VOL,IARC}. Two widely adopted regulations/measures on RF exposure are mainly considered i.e., {{the first one is the maximum permissible exposure (MPE) in [W/$m^2$], which is defined as the highest level of electromagnetic radiation (EMR)   to which a user may be exposed without incurring an established adverse health effect.
 This is a particularly important issue in the applications of wireless
power transfer (WPT) and various relevant studies have been carried out concerning EMR. For instance, the problem of scheduling the power  chargers  is investigated in \cite{safe_charging} so that the charging utility for all rechargeable devices
is maximized with a constraint on  EMR.
The works in \cite{safe_distributed} and \cite{scheduling_charging} deal with the problem of maximizing the harvested energy and wireless charging tasks scheduling, respectively, when the transmitted signals guarantee a well-defined EMR constraint.
The work \cite{secure_charging} discusses the security issues in RF-based WPT systems and investigates techniques that ensure confidentiality, security and safety for different types of attacks.}}
The second one is  the specific absorption rate (SAR)  that  measures the absorbed power in a unit mass of human tissue by using units of Watt per kilogram [W/kg].  The Federal Communications Commission (FCC) enforces an SAR limitation of $1.6$ W/kg averaged over one  gram of tissue on partial body exposure \cite{FCC-01};  other regulatory agencies (e.g., Comit\'e Europ\'een de Normalisation \'Electrotechnique in European Union) also adopt a similar limitation of $2$ W/kg averaged over 10 grams of tissue on SAR measurements \cite{ULC}. For short distances (e.g., the distance between the transmitter and the nearest user is less than $20$ cm), the SAR measure dominates the RF exposure and {{therefore becomes a more critical factor than the MPE}} for the design of efficient uplink communications. {{SAR is obviously different from MPE in that it is a different measure of RF exposure and applies to short distances. In addition, it needs to account for various exposure constraints on the whole body, partial body, hands, wrists, feet, ankles, etc., with various measurement limitations according to FCC regulations \cite{FCC-01}, therefore multiple SAR constraints are
needed even for a single transmit device. For instance, the iPhone 6 Model A1549 has a whole body SAR of 1.14 W/kg and head SAR of 1.08 W/kg \cite{iphone}.}}

The integration of RF exposure constraints, and specifically the SAR constraint, in the design of wireless communication systems is a timely research area with significant impact; however, few works in the literature take  into account SAR regulations.  In conventional single-antenna wireless communication systems, the SAR exposure limitation simply introduces an additional transmit power constraint, and can be easily guaranteed by reducing the transmit power below a specific threshold. However, it becomes more involved when there are multiple transmit antennas and signals can be beamformed towards certain directions. In this case, the exposure limitation is coupled with the multi-antenna channel vector/matrix, so it needs to be incorporated into the transmitter optimization.
{{The measurements and simulations carried out in \cite{Murch-04} demonstrate SAR as a function of
the phase different between two transmit antennas. The SAR reduction and modeling in multi-antenna systems has been studied in \cite{Qiang-07,Mahmoud-08}.} The SAR constraints are incorporated into the transmit signal design for a multiple-input multiple-output (MIMO) uplink channel in \cite{Hochwald-12}, where the quadratic model for the SAR measurements is first proposed.
In \cite{Hochwald-14}, a SAR code is proposed to handle a  SAR-constrained channel, which is shown to have significant improvement over the conventional Alamouti space-time code
under SAR limitations. It is also revealed that the SAR measurement is a function of the quadratic form of the transmitted signal with the SAR matrix. The SAR-aware beamforming and transmit signal covariance optimization methods are first presented in \cite{Ying-CISS-13} and \cite{Ying-Globecom-13}. Capacity analysis with multiple SAR constraints on single-user MIMO systems is intensively examined in \cite{Ying-TWC-13}.   Sum-rate analysis for a multi-user MIMO system with SAR constraints is performed in \cite{Ying-TWC-17} with both perfect and statistical channel state information (CSI). Form an information theoretical perspective,  SAR-constrained multi-antenna transmit covariance optimization can be considered as the classical MIMO channel capacity optimization problem subject to generalized linear transmit covariance constraints  studied in \cite{Rui-TIT-12}.

The integration of SAR exposure constraints in the design of wireless power transfer or  SWIPT systems is an unexplored research area \cite{Rui-TCOM-17}. Although SWIPT corresponds to a controlled transmission of RF radiation to communicate and energize, and may significantly contribute to the electromagnetic pollution (electrosmog), no work in the literature discusses the integration of SAR in SWIPT.  For example,  one of the main application areas of SWIPT is for medical devices in  wireless body area networks, where an access point will support the communication connectivity and the power sustainability of a short-range sensor network in, on, or around the human body \cite{WANG, XZHA, ASIF}. Since the access-point may be placed within $20$ cm of the body, SAR measurements are mandatory and should be taken into account in the SWIPT design. To the best of our knowledge, the impact of the SAR on SWIPT has not been studied before.

In this paper, we focus on the PS approach and   study the beamforming design in  a multiuser MISO downlink channel, where the receivers are characterized by both communications QoS and EH constraints with additional SAR limitations; the nonlinearity of the rectification circuit is also taken into account.
The contributions of this paper are summarized as follows:
\begin{enumerate}
    \item We introduce and formulate the SAR-aware beamforming optimization problem with simultaneous  QoS and  EH requirements.
	\item We derive the optimal beamforming solution in the general case by leveraging semidefinite programming (SDP) together with rank relaxation. More importantly, we prove that the proposed approach always gives rank-$1$ solutions, such that the exactly optimal beamforming solutions can be derived. In the special case of a single receiver with a single SAR constraint, we propose a fast algorithm to achieve the optimal solution.
    \item We develop low-complexity suboptimal solutions using both fixed beamforming schemes based on the criterion of  zero-forcing (ZF) or   regularized ZF (RZF), and a hybrid beamforming scheme  that combines the ZF and maximum ratio transmission (MRT)   to achieve a better performance-complexity tradeoff.	
	\item Furthermore, when the CSI and/or the SAR matrices are imperfectly known at the transmitter, we devise a robust beamforming scheme suitable for imperfect CSI and SAR matrices with bounded errors, to guarantee the QoS and EH requirements, as well as the SAR constraints in the worse case.
\end{enumerate}

In addition, we have made the following assumptions:
\begin{itemize}
    \item The input signal is circularly symmetric complex Gaussian (CSCG) distributed. This is in general sub-optimal and the input distribution under nonlinearity can be optimized as well, as studied in  \cite{CLE3,learning_signal,VAR}.
    \item We adopt a simplified nonlinear EH model that depends only on the power of the received signal. Note that a more accurate EH model  is a function of the received signal other than just its power, as shown and experimentally demonstrated in the literature \cite{CLE,experiment}.
\end{itemize}

{\underline{\it Notation}:} All boldface letters indicate vectors (lower case) or matrices (upper case). The superscripts $(\cdot)^{T}$, $(\cdot)^\dag$,  $(\cdot)^{-1}$, $(\cdot)^{\ddag}$, $\mbox{diag}(\cdot)$ denote the transpose, the conjugate transpose, the matrix inverse,  the Moore-Penrose pseudo-inverse and the diagonal elements respectively. A circularly symmetric complex Gaussian random variable $z$ with mean $\mu$ and variance $\sigma^2$ is represented as $z\sim\mathcal{CN}\left(\mu, \sigma^2\right)$. The identity matrix of size $M$, and the zero matrix of size $m\times n$, are denoted by $\mathbf{I}_{M}$ and $\mathbf{0}_{m\times n}$, respectively. $\|\mathbf{z}\|$ and $\|\mathbf{z}\|_F$ denote  the $L_2$ and Fibonacci norms of a complex vector $\mathbf{z}$, $|z|$ denotes the magnitude of a complex variable $z$, $\tr{(\mathbf{A})}$ denotes the trace of a matrix $\mathbf{A}$, while $\mathbf{A}\succeq 0$ indicates that matrix $\mathbf{A}$ is positive semidefinite. $i\triangleq \sqrt{-1}$ denotes the imaginary unit.

This paper is organized as follows: Section II sets up the system model and introduces the optimization problem considered. In Section III, we investigate the joint optimization of the optimal beamforming design and power splitting ratios.  Section IV discusses solutions for the fixed and hybrid beamforming schemes, while Section V develops the robust solution  under imperfect estimation of the channels and the SAR matrices.  Simulation results are presented in Section VI, followed by   conclusions in Section VII.

\section{System Model and Problem Formulation}

We assume a MISO  downlink channel consisting of an $N_t$-antenna transmitter (e.g., a base station (BS)) and $K$ single-antenna receivers that employ single-user detection as illustrated in Fig. \ref{fig:sys}. This ensures simple receiver processing due to size limitation or limited processing power.      We assume that the BS transmits with a total power $P_T$ and {{let $s_k$ be its transmitted data symbol to receiver $k$, which is Gaussian distributed with zero mean and unit variance, i.e., $\mathbb{E}\{\|s_k\|^2 \}=1$}.  The transmitted data symbol $s_k$ is mapped onto the antenna array elements by the beamforming/precoding vector $\mathbf{w}_k \in \mathbb{C}^{N_t\times 1}$. 
 {{Other properties of the signal  (e.g., modulation, waveform and input distribution) can also be optimized to improve the efficiency of RF-DC conversion \cite{CLE3,learning_signal}, so the CSCG distribution of the input signal is in general sub-optimal. However,    the joint design of signal and beamforming is a  challenging problem and out of the scope of this paper.}
\begin{figure}
  \centering
  \includegraphics[scale=0.35]{./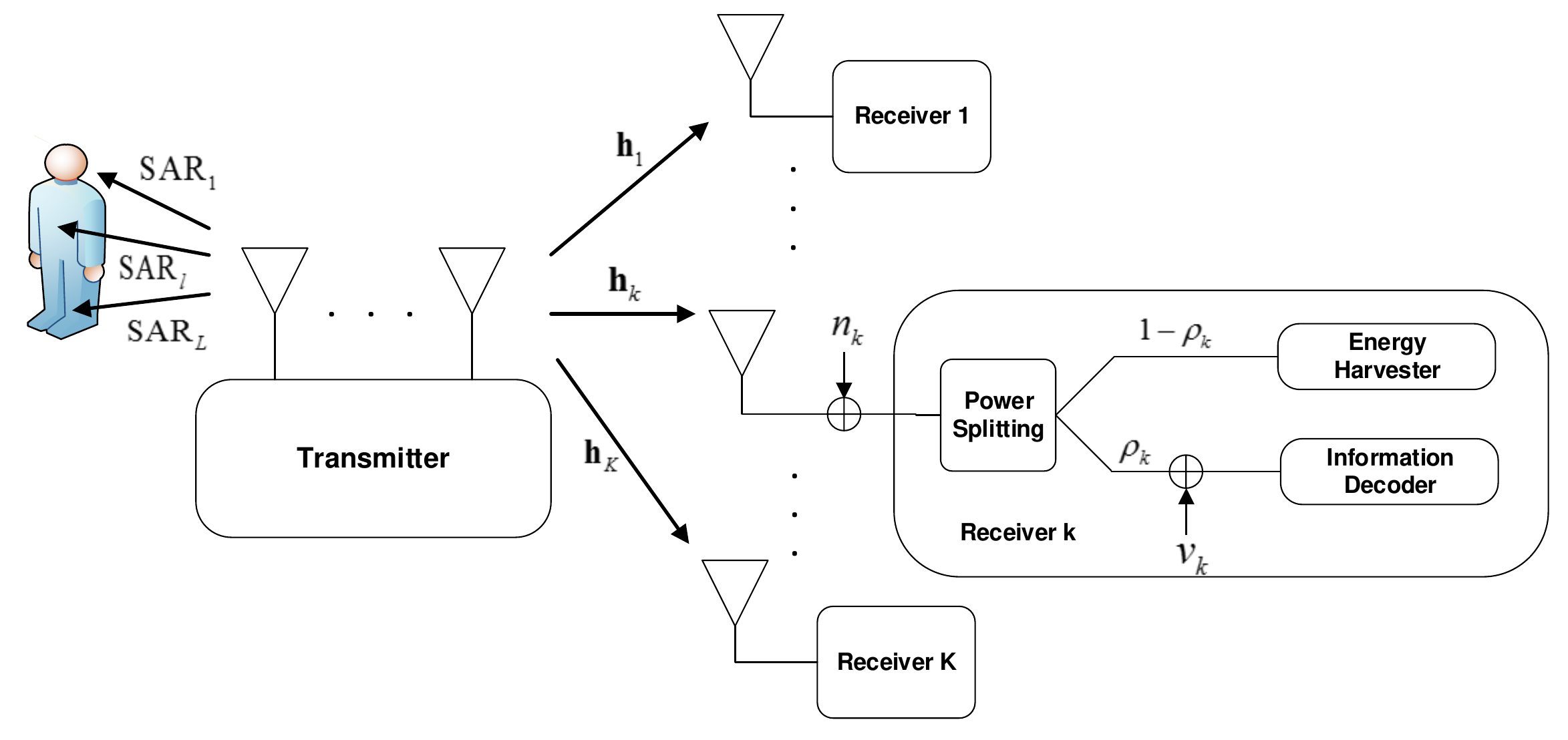}
  \caption{System model of SWIPT with SAR constraints.}\label{fig:sys}
\end{figure}
All wireless links exhibit independent fading and additive white gaussian noise (AWGN) with zero mean and certain variances. The fading is assumed to be frequency non-selective block fading.  This means that the fading coefficients in the vector channel $\mathbf{h}_{k} \in \mathbb{C}^{N_t\times 1}$  remain constant during one slot, but may change independently from one slot to another. 
The mean and variance of the channel coefficients capture  large-scale degradation effects such as path-loss and shadowing. The received     baseband signal at receiver $k$ can be expressed as
\begin{align}\label{sys1}
y_k=\underbrace{\mathbf{h}_{k}^\dag \mathbf{w}_k s_k}_{\textrm{Information signal}}+\underbrace{\sum_{j\neq k}\mathbf{h}_k^\dag \mathbf{w}_js_j}_{\textrm{Interference}}+n_k,
\end{align}
where $n_k$ denotes the AWGN component with zero mean and variance $N_0$. Therefore, the received power at receiver $k$ is equal to
\begin{equation}\label{eq:TotalPower}
P^r_k=\displaystyle \sum_{j=1}^{K} |\mathbf{h}_{k}^\dag \mathbf{w}_j|^2   + N_0.
\end{equation}

The receivers have RF-EH capabilities and therefore can harvest energy from the received RF signal based on the power splitting technique \cite{ZHA}, {{which is a mature SWIPT technique and does not require  strict time synchronization between information and power transfer\footnote{{ In case of nonlinear energy harvesting, neither the power splitting technique nor the time splitting technique always performs better.  In the lower power regime, the time switching technique performs better than the power splitting technique \cite{tsps}.}}}}. With this approach, each receiver splits its received signal into two parts: a) one part is converted to a baseband signal for further signal processing and data detection, and b) the other part is driven to the required circuits for conversion to DC voltage and energy storage. Let $\rho_k \in (0,1)$ denote the power splitting parameter for the $k$-th receiver; this means that $100 \rho_k \%$ of the received power is used for data detection, while the remaining amount is the input to the RF-EH circuitry.  More specifically, after reception of the RF signal at the receiver, a power splitter divides the power $P^r_k$ into two parts according to $\rho_k$, so that $\rho_k P^r_k$ is directed towards the decoding unit and $(1 - \rho_k)P^r_k$ towards the EH unit. During the baseband conversion, additional circuit noise, $v_k$, is present due to phase offsets and non-linearities which is modeled as AWGN with zero mean and variance $N_C$.

The signal-to-interference-plus-noise ratio (SINR) metric characterizing the data detection process at the $k$-th receiver is given by
\begin{equation}
\Gamma_k= \frac{\rho_k  |\mathbf{h}_k^\dag \mathbf{w}_k|^2}{\rho_k\left(N_0+\sum_{j\neq k} |\mathbf{h}_{k}^\dag\mathbf{w}_j|^2\right) + N_C}.
\label{eq:Gamma}
\end{equation}
On the other hand, the total   power that can be harvested is equal to $P_k^S = F( (1-\rho_k) P^r_k)$,  where $F(\cdot)$ is a non-linear parametric EH   function.  Note that in general $F(\cdot)$ should be a function $y_k$ rather than just the power of the energy signal. In this paper we adopt a simplified model to highlight the dependency on the power of the energy signal only.  More details about $F(\cdot)$  will be discussed later.

As discussed in the introduction,  wireless communication devices are  subject to SAR limitations.
Previous reported results such as \cite{Hochwald-14} have    shown that the pointwise
SAR value with multiple transmit antennas can be modeled as a quadratic form of the transmitted signal,  and the SAR matrix fully describes the SAR measurement’s dependence on the transmitted signals;  entries of the SAR matrix have units of W/kg. Because the SAR measurements are always real positive numbers, the SAR matrices  are positive-definite conjugate-symmetric  matrices.

Since SAR is a quantity averaged over the transmit signals, we model the $l$-th SAR constraint with a time-averaged quadratic constraint  given by
\be
\mbox{SAR}_l = \mathbb{E}_{\{s_k\}}\tr\left(\sum_{k=1}^K s_k^\dag \qw_k^\dag \qA_l \mathbf{w}_k s_k\right) = \sum_{k=1}^K \qw_k^\dag \qA_{l} \qw_k\le P_l,
\ee
where $\qA_l\succeq \qzero$ is the $l$-th SAR matrix and $P_l$ is the $l$-th SAR limit.

\subsection{Problem Formulation}

 We study the following problem of maximizing the ratios of the received SINR and EH over the target requirements, i.e., $\max\{\frac{\Gamma_1}{\bar\gamma_1}, \cdots, \frac{\Gamma_k}{\bar\gamma_k},\cdots, \frac{\Gamma_K}{\bar\gamma_K}, \frac{P_1^S}{\bar\lambda_1}, \cdots, \frac{P_k^S}{\bar\lambda_k}, \cdots, \frac{P_K^S}{\bar\lambda_K}\}$ subject to the  SAR and total power constraints, where $\bar\gamma_k, \bar\lambda_k$ are the SINR and the EH requirements, respectively. {{The choice of the objective function will balance  the received SINR and the EH between users.}}  To make the problem more tractable, we introduce an auxiliary  variable $t$, and formulate the optimization problem as follows
\bea\label{eqn:prob:01}
\textbf{P1:} && \max_{\{\qw_k, \rho_k,t\}}~t \\
&& \mbox{s.t.}~
   \frac{ |\qh_{k}^\dag\qw_k|^2}{\sum\limits_{j=1,j \ne
k}^K  |\qh_{k}^\dag\qw_j|^2 + N_0 + \frac{N_C}{\rho_k}} \ge t \bar \gamma_k\triangleq \gamma_k,\label{eq:SINR0}\\
&& F\left((1-\rho_k) \left(\sum\limits_{j=1}^K  |\qh_{k}^\dag\qw_j|^2 + N_0  \right)\right) \ge t \bar\lambda_k \triangleq\tilde\lambda_k, \label{eq:EH0}\\
&& 0\le \rho_k\le 1, \forall k, \notag \\
&&   \sum_{k=1}^K \qw_k^\dag \qA_{l} \qw_k \leq   P_{l} , \forall l,\label{SAR:constraint}\\ 
&& \sum_{k=1}^K \|\qw_k\|^2 \le P_T,\notag
\eea
where $P_T$ is the maximum total transmit power, and $L$ is the number of SAR constraints.

 In this paper, we adopt a non-linear parametric EH model, so the output DC power at the $k$-th receiver can be represented by the nonlinear function $F(x)$, where $x$ is the input RF power. {{The nonlinear function can take many forms to capture the relationship between
the input and output power at the energy receiver, such as the sigmoid function \cite{Schober-15}, \cite{Schober-17}, the linear fraction \cite{Yunfei-17}, the piece-wise linear function \cite{Dong-16}, the second order polynomial model \cite{Xu-17}, and a heuristic expression by curve fitting in \cite{Yunfei-16}.}}
 {{We use $F$ to denote  a general nonlinear  energy harvesting function; our analysis is general and independent of the specific function $F$.}} In general, the nonlinear EH function is monotonically increasing, therefore we can find the inverse mapping $F^{-1}(\cdot)$, and the EH constraint \eqref{SAR:constraint} can be rewritten as
\be
    (1-\rho_k)\left(\sum\limits_{j=1}^K  |\qh_{i}^\dag\qw_k|^2 + N_0  \right)\ge F^{-1} (  \tilde\lambda_k)\triangleq \lambda_k.
\ee

In general, it is difficult to solve the above problem \textbf{P1}, because both SINR and EH constraints \eqref{eq:SINR0} and \eqref{eq:EH0} are nonconvex, and we also have  additional multiple SAR constraints. Instead, we solve the following power minimization problem  \textbf{P2} below
\bea\label{eqn:prob:0}
\textbf{P2:} \min_{\{\qw_k, \rho_k\}}&& \sum_{k=1}^K \|\qw_k\|^2  \\
\mbox{s.t.}&&
  \frac{ |\qh_{k}^\dag\qw_k|^2}{\sum\limits_{j=1,j \ne
k}^K  |\qh_{k}^\dag\qw_j|^2 + N_0 + \frac{N_C}{\rho_k}} \ge   \gamma_k,\notag\\
&& (1-\rho_k)\eta_k \left(\sum\limits_{j=1}^K  |\qh_{j}^\dag\qw_k|^2 + N_0  \right)\ge  \lambda_k, \notag\\
&& 0\le \rho_k\le 1, \forall k, \notag \\
&&   \sum_{k=1}^K \qw_k^\dag \qA_{l} \qw_k \leq   P_{l}, \forall l.\notag  
\eea

Clearly, the problem \textbf{P2}  is closely related to the problem \textbf{P1}. For instance, if \textbf{P1} is solved and the optimal $t^*$ is achieved, then the same beamforming and power splitting solutions are also optimal for \textbf{P2} to achieve the same SINR and EH, and the optimal minimum power will be $P_T$. Because \textbf{P1} is a quasiconvex problem in $t$, once \textbf{P2} is solved, \textbf{P1} can be solved via a bisection search algorithm over $t$ as follows (note that $t$ in \textbf{P1} is embedded in $\gamma_k, \lambda_k$ of \textbf{P2}):

 \textbf{\underline{Algorithm 1}} to Solve \textbf{P1}:
 \begin{enumerate}
    \item Set the upper and lower bounds of $t$ as $t^U$ and $t^L$. Repeat the following steps until convergence.

    \item Calculate $t= \frac{t^U+t^L}{2}$, and solve \textbf{P2} with $\gamma_k=t \bar \gamma_k$ and $ \lambda_k=F^{-1} (t \bar\lambda_k)$.

    \item If \textbf{P2} is feasible and $\sum_{k=1}^K \|\qw_k\|^2\le P_T$, $t^L = t$; otherwise $t^U=t$.

    \end{enumerate}

Therefore, in the rest of the paper we will focus on solving the problem \textbf{P2}.

\section{The Optimal   Scheme}
In this section, we investigate the optimal beamforming and power splitting solutions to \textbf{P2}, and we begin with developing a fast solution for the special case of \textbf{P2} referring to a single-receiver system with a single SAR constraint, i.e., $K=L=1$.

\subsection{Fast Solution to the Special Case of $K=L=1$}
When we have only   a single user and a single SAR constraint, the problem \textbf{P1} reduces to the problem below, where the receiver and SAR indices are dropped for the sake of simplicity,
\bea\label{eqn:prob:1}
\textbf{P3:} \min_{\{\qw,\rho\}}&& \|\qw\|^2  \\
\mbox{s.t.}
&&\Gamma=\frac{|\qh^\dag\qw|^2}{N_0 + \frac{N_C}{\rho}} \ge \gamma, \label{eqn:SU:SINR}\\
&&  (1-\rho) \left( |\qh^\dag\qw|^2 + N_0  \right)\ge \lambda, \label{eqn:SU:EH}\\
&&   \qw^\dag \qA \qw \leq   P.  \label{eqn:SU:SAR}
\eea
We find the following Lemma is useful to derive the optimal solution to  \textbf{P1}.
\begin{lemma}
 Both the SINR and the EH constraints in \eqref{eqn:SU:SINR} and \eqref{eqn:SU:EH} must be satisfied with equality at the optimum of \textbf{P3}.
\end{lemma}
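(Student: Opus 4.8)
The plan is to argue by contradiction using a perturbation that strictly reduces the objective $\|\qw\|^2$ while preserving feasibility. Two structural facts drive the argument. First, the objective depends only on $\qw$, not on $\rho$, and a uniform downscaling $\qw\mapsto\alpha\qw$ with $\alpha\in(0,1)$ simultaneously shrinks the objective to $\alpha^2\|\qw\|^2$, the channel gain to $\alpha^2|\qh^\dag\qw|^2$, and the SAR form to $\alpha^2\,\qw^\dag\qA\qw$; in particular the SAR constraint \eqref{eqn:SU:SAR} can never obstruct downscaling, since it only becomes slacker. Second, $\rho$ acts in opposite directions on the two constraints: the SINR in \eqref{eqn:SU:SINR} is strictly increasing in $\rho$ (its denominator $N_0+N_C/\rho$ decreases), whereas the harvested energy in \eqref{eqn:SU:EH} is strictly decreasing in $\rho$ through the factor $1-\rho$. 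Thus $\rho$ is the lever that converts slack in one constraint into room to relax the other. I would also first note that any optimal $\rho$ is strictly interior to $(0,1)$: $\rho\to 0$ drives the SINR to $0$ and $\rho\to 1$ drives the harvested energy to $0$, so for positive targets $\gamma,\lambda$ the endpoints are infeasible and small two-sided perturbations of $\rho$ are admissible.

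To prove the EH constraint \eqref{eqn:SU:EH} is tight, I would suppose it is slack at an optimum $(\qw^\star,\rho^\star)$ and split into two subcases. If the SINR \eqref{eqn:SU:SINR} is also slack, a pure downscaling $\qw=\alpha\qw^\star$ with $\alpha$ just below $1$ keeps both constraints satisfied by continuity while strictly lowering the objective, a contradiction. If instead the SINR is tight, I would couple the downscaling with a small \emph{increase} of $\rho$ above $\rho^\star$, chosen to exactly restore the SINR equality; since the harvested energy was strictly above its target and both $\alpha$ and the new $\rho$ are close to their original values, the EH constraint still holds by continuity, yet the objective has strictly decreased. The argument for the SINR constraint is the mirror image: assuming it slack, either pure downscaling works (if EH is also slack) or one downscales while slightly \emph{decreasing} $\rho$ below $\rho^\star$ to maintain the EH equality, using the SINR slack to absorb the loss. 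In every case we exhibit a feasible point with smaller $\|\qw\|^2$, contradicting optimality, so both constraints must hold with equality.

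The main obstacle is precisely the coupling through $\rho$: when one constraint is tight, $\qw$ cannot be scaled down in isolation, so the construction must co-adjust $\rho$, and the crux is a quantitative check that the perturbed pair $(\alpha,\rho)$ remains admissible. Concretely, I expect the delicate steps to be (i) verifying that the $\rho$ needed to re-tighten the active constraint stays strictly inside $(0,1)$ for $\alpha$ sufficiently close to $1$ --- this follows from solving the equality $N_C/\rho=\alpha^2(N_0+N_C/\rho^\star)-N_0$ in the SINR case, or $1-\rho=\lambda/(\alpha^2|\qh^\dag\qw^\star|^2+N_0)$ in the EH case, and checking the solution tends to $\rho^\star$ as $\alpha\to 1^-$ --- and (ii) invoking continuity to guarantee the originally-slack constraint survives the joint perturbation. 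The favorable point that makes the whole scheme go through is that the SAR constraint is monotone under downscaling and therefore imposes no additional restriction, so the only balancing act is between the SINR and EH constraints via $\rho$.
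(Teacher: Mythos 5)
Your proof is correct and follows essentially the same route as the paper: argue by contradiction, exploit the fact that $\rho$ trades slack between \eqref{eqn:SU:SINR} and \eqref{eqn:SU:EH} in opposite directions, and then strictly reduce $\|\qw\|^2$ by downscaling (which can only loosen the SAR constraint). In fact you are more explicit than the paper, whose one-line proof only mentions adjusting $\rho$ and leaves the final downscaling step --- the actual source of the contradiction with optimality, since the objective does not depend on $\rho$ --- implicit.
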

\begin{proof}
It can be proved by contradiction.  If \eqref{eqn:SU:SINR}  is not tight, $\rho$ can be reduced without affecting \eqref{eqn:SU:EH}; if \eqref{eqn:SU:EH} is not tight, $\rho$ can be increased without affecting \eqref{eqn:SU:SINR}. This completes the proof.
\end{proof}

Base on Lemma 1, we know that the optimal solution to \textbf{P1} falls into the following two cases:
\begin{itemize}
  \item \textbf{Case I}: Only the SINR and the EH constraints are   satisfied with equality.

  In this case, we first solve the optimal $\bar\rho$. From the equality constraint \eqref{eqn:SU:SINR}, we can get
  \be
    |\qh^\dag\qw|^2 = \gamma(N_0 + \frac{N_C}{\rho}).\label{eq:hw}
  \ee
  Substituting \eqref{eq:hw} into \eqref{eqn:SU:EH}, we can have the following equation, i.e.,
  \be
    (1-\rho) (\gamma N_0 + \gamma\frac{N_C}{\rho} + N_0) = \lambda,
  \ee
  which leads to a quadratic equation of $\rho$ given by
\bea\label{eqn:rho}
     -(\gamma+1) N_0 \rho^2 + ((\gamma+1)N_0-\gamma N_C - \lambda)\rho + \gamma N_C=0.
\eea
Its solution is written as
\be
    \bar\rho = \frac{((\gamma+1)N_0-\gamma N_C - \lambda) + \Delta}{2(\gamma+1) N_0},\label{eq:rho}
\ee
where $\Delta\triangleq \sqrt{ ((\gamma+1)N_0-\gamma N_C - \lambda)^2 +4\gamma(1+\gamma) N_0 N_C  }$.
It is easy to verify that $\bar\rho$ is between 0 and 1, because the left hand side of the quadratic equation \eqref{eqn:rho} is strictly negative when $\rho=1$ and strictly positive when $\rho=0$. Note that $\bar\rho$ is the optimal solution regardless of $\qw$ as long as \textbf{P3} is feasible.

Once $\bar\rho$ is found, the optimal $\qw$ can be solved by satisfying the equality \eqref{eq:hw} and minimizing the objective simultaneously, which is given by
\be
    \bar\qw = \frac{\sqrt{\gamma(N_0 + \frac{N_C}{\bar\rho})}\qh}{\|\qh\|^2}.
\ee

 After solving $\bar\qw$, we can substitute it into \eqref{eqn:SU:SAR} and check whether the SAR constraint   is satisfied: if so, $\bar\qw$ is the optimal beamforming solution; otherwise, we need to consider the next case.

  \item \textbf{Case II}: All three SINR, EH and SAR constrains  are  satisfied with equality.

 In this case, we can simplify \textbf{P3} to the following problem where  $\bar\rho$ is given by \eqref{eq:rho}
\bea\label{eqn:prob:11}
\textbf{P4:} \min_{\{\qw,\rho\}}&& \|\qw\|^2 \label{eq:SU:P4} \\
\mbox{s.t.}
&& |\qh^\dag\qw|^2  \ge \gamma (N_0 + \frac{N_C}{\bar\rho})\triangleq \bar c, \notag\\
&&   \qw^\dag \qA \qw \leq   P. \label{eq:waw}  
\eea

We keep the inequality constraints for the convenience of the derivation. In general, it is difficult to find the optimal solution to \textbf{P4}  directly. We first study the analytical beamforming structure that   facilitates the development of an efficient solution.

The Lagrangian of \textbf{P4} is given by
\bea
L &  = &\|\qw\|^2  +  a\left( \gamma \left(N_0 + \frac{N_C}{\bar\rho}\right) - |\qh^\dag\qw|^2 \right)\notag\\
&& +\qw^\dag(\qI + b\qA - a \qh\qh^\dag)\qw +  a  \gamma (N_0 + \frac{N_C}{\bar\rho}) - b P,\notag\\
&& + b(\qw^\dag \qA \qw -  P)
\eea
where $a\ge 0$ and $b\ge 0$ are dual variables.
The dual problem is then formulated as:
\bea
    \max_{a,b\ge0} && a  \gamma (N_0 + \frac{N_C}{\bar\rho}) - b P \\
    \mbox{s.t.} &&  \qI + b\qA - a \qh\qh^\dag \succeq \qzero,
\eea
and the optimal $\qw$ admits the following form
\be
    \qw= a(\qI + b\qA)^{-1}\qh.
\ee

Let the eigenvalue decomposition of $\qA$ be $\qA = \qU \qD \qU^\dag$, where $\qU$ is a unitary matrix and $\qD$ is a diagonal matrix with elements being the eigenvalues of $\qA$.  Then we have $(\qI + b\qA)^{-1} = \qU(\qI + b \qD)\qU^\dag$, and
\be
    \qw = a\qU(\qI + b \qD)^{-1}\qU^\dag\qh= a\qU(\qI + b \qD)^{-1}\tilde \qh,~~ \mbox{with}~ \tilde \qh \triangleq \qU^\dag\qh,\label{eq:w2}
\ee
and
\be
 |\qh^\dag \qw | = \|\qh^\dag\qU(\qI + b \qD)^{-1}\tilde \qh\| = a \tilde \qh^\dag(\qI + b \qD)^{-1}\tilde \qh = \sqrt{\bar c},\label{eq:hw2}
\ee
where $\bar c$ is defined in \textbf{P4}.
From \eqref{eq:hw2}, we can solve
\be
a^2 = \frac{\bar c }{(\tilde \qh^\dag(\qI + b \qD)^{-1}\tilde \qh )^2}.\label{eq:a}
\ee

Now we proceed to consider the constraint \eqref{eq:waw}  and define the function below:
\bea
    f(b)& = & \qw^\dag \qA \qw - P \notag\\
       &\overset{(a)}{=}  &a^2 \tilde \qh^\dag (\qI + b \qD)^{-1} \qU^\dag\qA \qU(\qI + b \qD)^{-1}\tilde \qh- P\notag\\
    &\overset{(b)}{=}&a^2  \tilde \qh^\dag (\qI + b \qD)^{-1}\qD(\qI + b \qD)^{-1}\tilde \qh- P\notag\\
      &\overset{(c)}{=}& \frac{\bar c \tilde \qh^\dag (\qI + b \qD)^{-1}\qD(\qI + b \qD)^{-1}\tilde \qh}{(\tilde \qh^\dag(\qI + b \qD)^{-1}\tilde \qh )^2}- P,\label{eq:fb}
\eea
where $(a)$ is due to \eqref{eq:w2},  $\qA = \qU \qD \qU^\dag$ is used to obtain $(b)$, and we use \eqref{eq:a} to get $(c)$.

Clearly, the parameter $b$ that satisfies $f(b)=0$ uniquely determines the optimal solution. We find the following theorem useful to develop an efficient numerical solution to solve $f(b)=0$.

\begin{theorem}
  $f(b)$ in \eqref{eq:fb} is a decreasing function in $b$.
\end{theorem}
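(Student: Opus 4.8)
The plan is to exploit the diagonal structure introduced by the eigenvalue decomposition $\qA = \qU\qD\qU^\dag$ so that $f(b)$ becomes a ratio of elementary scalar sums, and then reduce monotonicity to a Cauchy--Schwarz inequality. Writing $d_i>0$ for the (strictly positive, since $\qA$ is positive-definite) diagonal entries of $\qD$ and $g_i\triangleq|\tilde h_i|^2\ge 0$ for the squared magnitudes of the entries of $\tilde\qh=\qU^\dag\qh$, the diagonal matrix $(\qI+b\qD)^{-1}$ has entries $1/(1+b d_i)$, so I would introduce the three sums
\be
S_1(b)=\sum_i \frac{g_i}{1+b d_i},\quad S_2(b)=\sum_i \frac{g_i d_i}{(1+b d_i)^2},\quad S_3(b)=\sum_i \frac{g_i d_i^2}{(1+b d_i)^3}.
\ee
With this notation, \eqref{eq:fb} reads $f(b)=\bar c\, S_2(b)/S_1(b)^2 - P$, and since $\bar c>0$ and $P$ is constant, it suffices to show that $g(b)\triangleq S_2(b)/S_1(b)^2$ is decreasing.

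The key computational observation is that these sums are linked through differentiation: term-by-term one checks directly that $S_1'(b)=-S_2(b)$ and $S_2'(b)=-2S_3(b)$. Using the quotient rule and these two identities, I would compute
\be
g'(b)=\frac{S_2'S_1-2S_2 S_1'}{S_1^3}=\frac{-2S_3 S_1+2S_2^2}{S_1^3}=\frac{2\left(S_2^2-S_1 S_3\right)}{S_1^3}.
\ee
Because the problem is feasible we have $\tilde\qh\neq\qzero$, hence $S_1(b)>0$ for all $b\ge 0$ and the denominator $S_1^3$ is strictly positive. Therefore the sign of $g'(b)$ is entirely governed by the numerator $S_2^2-S_1 S_3$, and the claim is equivalent to the inequality $S_2^2\le S_1 S_3$.

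The final step, which is really the crux of the argument, is to recognize $S_2^2\le S_1 S_3$ as an instance of the Cauchy--Schwarz inequality. Setting $a_i=\sqrt{g_i/(1+b d_i)}$ and $c_i=\sqrt{g_i}\,d_i/(1+b d_i)^{3/2}$, one has $\sum_i a_i^2=S_1$, $\sum_i c_i^2=S_3$, and $\sum_i a_i c_i = S_2$, so Cauchy--Schwarz yields $S_2^2=(\sum_i a_i c_i)^2\le(\sum_i a_i^2)(\sum_i c_i^2)=S_1 S_3$. This gives $g'(b)\le 0$ and hence $f'(b)\le 0$, proving $f$ is decreasing. The only subtle point is guessing the correct split of the summand into $a_i$ and $c_i$; once the derivative is reduced to $S_2^2-S_1 S_3$, the matching $a_ic_i=g_id_i/(1+bd_i)^2$ makes the Cauchy--Schwarz application transparent, and strict monotonicity follows whenever the vectors $(a_i)$ and $(c_i)$ are not proportional.
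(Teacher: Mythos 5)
Your proof is correct, and it takes a genuinely different route from the paper's. The paper attempts a direct algebraic identity: after diagonalization it rewrites $f(b)$ (up to the constant $\bar c$) as $\frac{1}{\sum_n h_n^2/d_n}\bigl(1+\sum_{m\ne n}\frac{h_m^2h_n^2(d_m-d_n)^2}{(1+bd_m^2)(1+bd_n^2)}\bigr)-P$, a closed form in which every $b$-dependent term is manifestly decreasing, and reads off the monotonicity by inspection. Your argument instead differentiates: you observe the chain $S_1'=-S_2$, $S_2'=-2S_3$, reduce the sign of $g'(b)$ to the single inequality $S_2^2\le S_1S_3$, and close with Cauchy--Schwarz via the split $a_i c_i=g_i d_i/(1+bd_i)^2$. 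All of your computations check out (the quotient-rule step, the identification $\sum a_i^2=S_1$, $\sum c_i^2=S_3$, $\sum a_ic_i=S_2$), and you correctly note that only the positive factor $\bar c$ and the additive constant $P$ separate $f$ from $g=S_2/S_1^2$. What each approach buys: the paper's identity, if one trusts the ``algebraic manipulation'' (the displayed formula contains apparent typos, e.g.\ $(1+bd_n^2)$ where $(1+bd_n)$ is presumably meant, and the factor $\bar c$ is dropped), gives slightly more information --- an explicit expression showing exactly how fast $f$ decays and that it tends to a finite limit $\frac{1}{\sum_n h_n^2/d_n}-P$ as $b\to\infty$, which is useful for initializing the bisection in Algorithm~2. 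Your derivative-plus-Cauchy--Schwarz route is shorter, fully verifiable line by line, does not require guessing or confirming a nontrivial rational-function identity, and degrades gracefully to the positive-semidefinite case ($d_i=0$ terms contribute nothing to $S_2,S_3$ but positively to $S_1$, so the inequality still holds); you also correctly identify the equality case (proportionality of $(a_i)$ and $(c_i)$) governing strict monotonicity, which the paper does not discuss.
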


 \begin{proof}
Suppose $\qD =\mbox{Diag}(d_1, \cdots, d_{N_t})$ and $|\tilde\qh|=[h_1, \cdots, h_{N_t}]$. Then after some algebraic  manipulation,   $f(b)$ becomes
\bea
     &&f(b) = \frac{  ^\dag (\qI + b \qD)^{-1}\qD(\qI + b \qD)^{-1}\tilde \qh}{ \tilde \qh^\dag(\qI + b \qD)^{-1}\tilde \qh  \tilde \qh^\dag(\qI + b \qD)^{-1}\tilde \qh   }-P\notag\\
    &&=\frac{\sum_{n=1}^{N_t}\frac{h_n^2d_n}{(1+bd_n)^2} }{ \left(\sum_{n=1}^{N_t} \frac{h^2}{1+b d_n}\right)^2}-P\label{eq:fbdecreasing}\\
    &&=\frac{1}{\sum_{n=1}^{N_t}\frac{h_n^2}{d_n}}\left(1+\sum_{m\ne n} \frac{ h_m^2 h_n^2(d_m-d_n)^2 }{(1+b d_m^2)(1+b d_n^2)}\right)-P.\notag
\eea

From \eqref{eq:fbdecreasing}, it can be seen that $f(b)$ is a decreasing function in $b$, and this completes the proof.
\end{proof}

{{Theorem 1 can be verified by the simulation result in Fig. \ref{fig:fb}.}}

\begin{figure}
  \centering
 \includegraphics[width=0.35\textwidth]{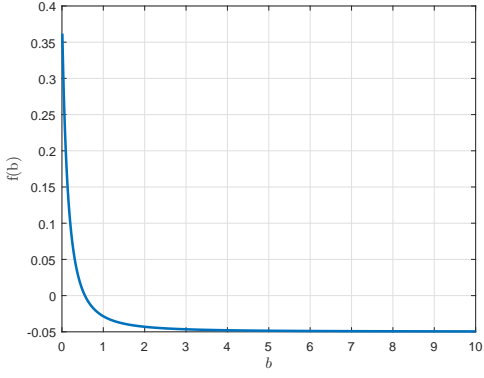}
  \caption{The monotonicity of the function $f(b)$. The parameters used are: ~~$\bar c=0.3685, P=1 W, \tilde \qh=[  -1.6475 + 0.3194i  ~~ -1.0247 - 0.0921i  ~~  0.2358 + 0.1299i ~~   0.2767 - 0.3367i]^T, \qD=\mbox{diag}([    7.4469  ~~   1.8896  ~~   6.8678  ~~   1.8351])$.}\label{fig:fb}
\end{figure}

Based on Theorem 1, we propose the following bisection search algorithm to solve $b$.

 \textbf{\underline{ Algorithm 2}} to Solve $f(b)=0$.

\begin{enumerate}
  \item Initialize $b_{\min}\ge 0$ and $b_{\max}>0$. Repeat the following steps until convergence.
  \item Set $b=\frac{b_{\min}+ b_{\max}}{2}$.
  \item If $f(b)>0$, $b_{\min}=b$; otherwise $b_{\max}=b$.
\end{enumerate}

Once the optimal $b$ is found, the optimal  optimal beamforming solution and the optimal $a$ can therefore be solved via   \eqref{eq:w2} and  \eqref{eq:a}, respectively.
\end{itemize}

\subsection{The Optimal Solution using SDP in the General Case}

 The general case of the problem \textbf{P2} is nonconvex and difficult to solve. In this section, we develop an efficient SDP based algorithm that jointly optimizes the beamforming vectors and power splitting parameters. To tackle \textbf{P2}, we first introduce new matrix variables $\qW_k=\qw_k\qw_k^\dag$, and then  \textbf{P2} can be reformulated as
\bea\label{eqn:prob:1}
\textbf{P5:} \min_{\{\qW_k, \rho_k\}}&& \sum_{k=1}^K  \tr(\qW_k) \\
\mbox{s.t.}&&
    \frac{ \tr(\qh_{k}\qh_{k}^\dag\qW_k) }{\sum\limits_{j=1}^K  \tr(\qh_{k}\qh_{k}^\dag\qW_j) + N_0 + \frac{N_k}{\rho_k}} \ge \frac{\gamma_k}{1+\gamma_k},\notag\\
&&   \sum\limits_{j=1}^K  \tr(\qh_{k}\qh_{k}^\dag\qW_j) + N_0   \ge \frac{\lambda_k }{1-\rho_k}, \notag\\
&& 0\le \rho_k\le 1, \qW_k\succeq \qzero, \forall k, \notag \\
&&  \sum_{k=1}^K \tr( \qA_l \qW_k) \leq   P_{l} , \forall l. \label{eq:P5:SAR}
\eea

The problem \textbf{P5} is convex because it is linear in all $\{\qW_k\}$ , and both terms $\frac{1}{\rho_k}$ and $\frac{1}{1-\rho_k}$ are convex in $\rho_k>0$. It can be efficiently solved using numerical software package such as CVX \cite{cvx}. Once \textbf{P5} is optimally solved, if the resulting solutions $\{\qW_k\}$  are all rank-1, then they are
the exactly optimal solutions; otherwise, the solutions only  provide a lower bound for the minimum required transmit power.

It has been proved in \cite{Zhang-14} that without the SAR constraints \eqref{eq:P5:SAR}, the above SDP with rank relaxation will produce rank-1 solutions $\{\qW_k\}$ which means they are also optimal to the problem \textbf{P2}. However, whether the SDP with rank relaxation can generate the optimal solution highly depends on the problem structure. With the additional SAR constraints, it is unknown whether this property remains true. In the following theorem, we prove that this is still the case.

\begin{theorem}\label{theo2}
    The optimal solution to \textbf{P5} satisfies $\mbox{rank}(\qW_k)=1, \forall k$, i.e., the SDP relaxation is tight, and the optimal solution to the problem \textbf{P2} can be recovered from $\{\qW_k\}$  via eigenvalue decomposition.
\end{theorem}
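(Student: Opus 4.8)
The plan is to establish tightness of the relaxation through the Karush--Kuhn--Tucker (KKT) conditions of the convex problem \textbf{P5} (which, being convex with Slater's condition, has necessary and sufficient KKT conditions). First I would attach dual variables $\alpha_k\ge 0$ to the reformulated SINR constraints, $\beta_k\ge 0$ to the EH constraints, $\delta_l\ge 0$ to the SAR constraints in \eqref{eq:P5:SAR}, and a matrix multiplier $\qZ_k\succeq\qzero$ to each $\qW_k\succeq\qzero$. Writing the Lagrangian and imposing stationarity $\partial L/\partial\qW_k=\qzero$ gives $\qZ_k=\qQ-\alpha_k(1+\gamma_k)\qh_k\qh_k^\dag$, where
\be
\qQ \triangleq \qI+\sum_{j=1}^K\alpha_j\gamma_j\qh_j\qh_j^\dag+\sum_{l=1}^L\delta_l\qA_l-\sum_{i=1}^K\beta_i\qh_i\qh_i^\dag
\ee
is common to all users. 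Dual feasibility $\qZ_k\succeq\qzero$ at once yields $\qQ\succeq\alpha_k(1+\gamma_k)\qh_k\qh_k^\dag\succeq\qzero$, so $\qQ$ is at least positive semidefinite.

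The core of the argument, and the step I expect to be the main obstacle, is upgrading this to $\qQ\succ\qzero$; the difficulty is entirely the negative term $-\sum_i\beta_i\qh_i\qh_i^\dag$ contributed by the EH constraints, which is absent in pure QoS-constrained designs and could in principle make $\qQ$ singular. I would argue by contradiction: suppose $\qQ\qv=\qzero$ for some $\qv\neq\qzero$. Testing $\qZ_k\succeq\qzero$ against $\qv$ gives $0\le\qv^\dag\qZ_k\qv=-\alpha_k(1+\gamma_k)|\qh_k^\dag\qv|^2$, forcing $\alpha_k|\qh_k^\dag\qv|^2=0$ for every $k$. Expanding $\qv^\dag\qQ\qv=0$ and cancelling the now-vanishing $\alpha_j$ terms leaves $\|\qv\|^2+\sum_l\delta_l\qv^\dag\qA_l\qv=\sum_i\beta_i|\qh_i^\dag\qv|^2$, whose left side is strictly positive; hence there is an index $i^\star$ with $\beta_{i^\star}>0$ and $\qh_{i^\star}^\dag\qv\neq 0$, which in turn forces $\alpha_{i^\star}=0$.

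It then remains to show that $\alpha_{i^\star}=0$ together with $\beta_{i^\star}>0$ is impossible, and here I would invoke the KKT condition with respect to the power-splitting variable $\rho_{i^\star}$, the ingredient specific to the SWIPT/PS structure. Feasibility excludes the boundary values ($\rho_{i^\star}=0$ violates the SINR requirement since $\gamma_{i^\star}>0$, while $\rho_{i^\star}=1$ violates a positive EH requirement), so $\rho_{i^\star}\in(0,1)$ and the stationarity $\partial L/\partial\rho_{i^\star}=0$ reduces, once $\alpha_{i^\star}=0$, to $\beta_{i^\star}\lambda_{i^\star}/(1-\rho_{i^\star})^2=0$. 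If $\lambda_{i^\star}>0$ this contradicts $\beta_{i^\star}>0$; if $\lambda_{i^\star}=0$ the EH constraint is strictly slack, so complementary slackness already forces $\beta_{i^\star}=0$, again a contradiction. Thus $\qQ\succ\qzero$.

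With $\qQ$ of full rank $N_t$ and $\qh_k\qh_k^\dag$ rank one, $\qZ_k=\qQ-\alpha_k(1+\gamma_k)\qh_k\qh_k^\dag$ has rank at least $N_t-1$. Complementary slackness $\qZ_k\qW_k=\qzero$ confines $\mbox{range}(\qW_k)$ to the at-most one-dimensional null space of $\qZ_k$, so $\mbox{rank}(\qW_k)\le 1$; since any feasible point needs $\tr(\qh_k\qh_k^\dag\qW_k)>0$ we have $\qW_k\neq\qzero$, giving $\mbox{rank}(\qW_k)=1$, after which $\qW_k=\qw_k\qw_k^\dag$ recovers an optimal beamformer for \textbf{P2} by eigendecomposition. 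I would emphasize that the SAR constraints enter $\qQ$ only through the positive-semidefinite block $\sum_l\delta_l\qA_l$, so they never threaten positive definiteness; the genuinely new work relative to the SAR-free proof of \cite{Zhang-14} is the coupling argument through $\rho_{i^\star}$ that rules out $\beta_{i^\star}>0$ with $\alpha_{i^\star}=0$.
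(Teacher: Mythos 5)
Your proposal is correct and follows essentially the same route as the paper's Appendix~A: form the (partial) Lagrangian of \textbf{P5}, observe that the dual-feasibility matrix attached to $\qW_k$ is a common positive-semidefinite matrix minus the rank-one term $\alpha_k(1+\gamma_k)\qh_k\qh_k^\dag$, show the common part is nonsingular, and conclude $\mbox{rank}(\qW_k)\le 1$ from complementary slackness $\tr(\qW_k\qZ_k)=0$; the SAR matrices only add the harmless positive-semidefinite block $\sum_l\delta_l\qA_l$, exactly as you note. The one place you genuinely depart from the paper is the nonsingularity step. The paper passes from $\alpha_k|\qh_k^\dag\qx|^2=0$ directly to $\qh_k^\dag\qx=0$ for all $k$, which tacitly assumes every $\alpha_k>0$ and is needed there to cancel the negative $-\beta_j(1-\rho_j)\qh_j\qh_j^\dag$ contribution in the quadratic form; your argument instead isolates the residual case $\alpha_{i^\star}=0$, $\beta_{i^\star}>0$, $\qh_{i^\star}^\dag\qv\neq 0$ and kills it with the stationarity condition in $\rho_{i^\star}$ (possible since feasibility forces $\rho_{i^\star}\in(0,1)$), which yields $\beta_{i^\star}\lambda_{i^\star}=0$ and hence a contradiction. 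This costs you one extra KKT equation but buys a complete treatment of the degenerate multiplier configuration that the paper's proof glosses over, so your version is, if anything, the more watertight of the two.
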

The proof is given in Appendix A.

\section{Sub-optimal beamforming schemes}

Although the algorithm based on SDP in Section III ensures the optimal beamforming and power splitting solutions, its complexity is high. In this section, we focus on low-complexity suboptimal solutions including both the fixed heuristic beamforming schemes and the hybrid beamforming scheme that considers  a linear combination of  fixed beamforming strategies.

\subsection{Solutions using Fixed Beamforming}

Suppose the fixed beamforming vector is given by $\qw_k=\sqrt{p_k}\qw^f_k, \|\qw^f_k\|=1$, where $p_k$ is the power for the $k$-th receiver. Let $G_{k,j}\triangleq|\mathbf{h}_{k}^{\dag}\mathbf{w}_{j}^{f}|^2$
denote the link gain between the BS and the $k$-th receiver, and $F_{k,l} = {\qw_k^{f}}^\dag \qA_{l} \qw_k^{f}$ denote the radiation channel gain due to the transmission intended for the $k$-th receiver.

Using the fixed beamforming and rearranging the terms in the problem \textbf{P2}, we have
{\small
\begin{subeqnarray} \label{eq:OFW} \slabel{eq:OFW1}
\vspace{-0.5cm}
\textbf{P6}:~~~ \displaystyle{\min_{ \{p_k,\rho_k\}}} && \sum_{k=1}^{K}p_{k}\\
\slabel{eq:OFW2} \textrm{s.t.} (\frac{1}{\gamma_{k}}+1) G_{k,k} p_{k} & \geq & \frac{1}{\rho_k}N_C +  {\displaystyle \sum_{j=1}^{K}}G_{k,j}p_{j} + N_0 , \forall k,\\
\slabel{eq:OFW3}   {\displaystyle \sum_{j=1}^{K}}G_{k,j}p_{j} +N_0  & \geq & \frac{\lambda_{k}}{1-\rho_{k}}, \forall k,\\
\slabel{eq:OFW5} p_{k} \ge 0,~&& 0 \le \rho_{k} \le 1, \forall k,\\
\sum_{k=1}^K p_k  F_{k,l} &\leq &  P_{l}, \forall l.
\end{subeqnarray}}
The optimization problem \textbf{P6} is convex because it is comprised of a linear objective function and convex constraints. Constraint (\ref{eq:OFW2}) is convex because the term $\frac{1}{\rho_k}$ is convex for $\rho_k>0$ and the other terms are linear, (\ref{eq:OFW3}) is a restricted hyperbolic constraint and the term $\frac{1}{1-\rho_k}$ is also convex.  Note that by solving the problem \textbf{P6}, we obtain optimal values for the splitting parameters, as well as an optimal power allocation for any given  fixed beamforming vectors.

Commonly used fixed beamforming vectors include   the  MRT,   ZF and RZF criteria \cite{RZF}, defined respectively below:
\bea
     \qw_k^{MRT} &=&  \frac{\qh_k}{\|\qh_k\|}, \\
 \mathbf{w}_k^{\text{ZF}} &=& \frac{\left(\mathbf{I}_{N_t} - \mathbf{H}_k^\ddag \mathbf{H}_k \right)\mathbf{h}_k}{\left\|\left(\mathbf{I}_{N_t} -\mathbf{H}_k^\ddag \mathbf{H}_k \right)\mathbf{h}_k \right\|},\\
    \qw_k^{RZF} &=&  (K\qI + \qH\qH^\dag + \sum_{l=1}^L   \qA_l)^{-1} \qh_k,
\eea
where we have defined $\qH=[\qh_1,\hdots,\qh_K]$, and $\mathbf{H}_k = [\mathbf{h}_1, \hdots, \mathbf{h}_{k-1}, \mathbf{h}_{k+1}, \hdots, \mathbf{h}_K]$. The MRT beamforming and the ZF beamforming aim to enhance the received signal strength and remove interference, respectively, while the RZF beamforming provides a tradeoff between them. Note that in the RZF beamforming, the SAR matrices are included to incorporate the radiation constraints.

As a special case, for the ZF beamforming, the optimization problem can be further simplified  because $G_{i,j}=0$, for $i \neq j$. Hence the   optimization problem  \textbf{P6}
simplifies into the following formulation
\bea \label{eq:ZF}
 \textbf{P7:}~~~  \displaystyle{\min_{\{p_k, \rho_k\}}} && \sum_{k=1}^{K}p_{k}\\
  \textrm{s.t.} \frac{\rho_{k}G_{k,k} p_{k}}{\rho_k N_0 + N_C} & \geq & \gamma_{k}, \forall k,\notag\\
 (1-\rho_{k})(G_{k,k}p_k+N_0) & \geq & \lambda_{k}, \forall k,\notag\\
  p_{k} \ge 0,&&0 \le \rho_{k} \le 1, \forall k,\notag\\
\sum_{k=1}^K p_k  F_{k,l} &\leq &  P_{l}, \forall l.\notag
\eea
The problem \textbf{P7} is also convex but is much easier to solve than \textbf{P6} because of the simplified constraints. {{Unlike the previous works \cite{Stelios-TWC-14,Zhang-14} where closed-form solutions to \textbf{P7}  were derived, the SAR constraint does not permit a closed-form solution, therefore we use CVX \cite{cvx} to solve both  \textbf{P6}  and \textbf{P7}. }}

\subsection{Hybrid  Beamforming}
In this subsection, we introduce the hybrid beamforming scheme to provide a   tradeoff between enhancing the received SINR and energy harvested, dealing with interference, and accounting for the SAR constraints, which admits the following expression
\be\label{eq:BF:hybrid}
    \qw_k^{\text{hyb}} =  \sqrt{x_k} \qw_k^{ZF}  + \sqrt{y_k} \qw_k^{MRT},
\ee
where $x_k$ and $y_k$ are combining coefficients.

 Then
\begin{equation}
G_{i,j}=|\mathbf{h}_{i}^\dag \mathbf{w}_j^{\text{hyb}}|^{2}=\begin{cases}
y_{j}|\mathbf{h}_{i}^{\dag} \qw_i^{MRT} |^{2}, & i\neq j\\
|\sqrt{x_{i}}\qh_i^\dag\qw_i^{ZF}  + \sqrt{y_{i}}\mathbf{h}_{i}^{\dag}\qw_i^{MRT} |^{2}, & i=j.
\end{cases}\label{eq:chGainsMRTZF1}
\end{equation}

Define $Q_{ij} \triangleq |\mathbf{h}_{i}^\dag\qw_i^{MRT}| $, $q_i\triangleq |\qh_i^\dag\qw_i^{ZF}|=\|\qw_i^{ZF}\|^2$, and $r_i=|\qh_i^\dag\qw_i^{ZF} \qh_{i}^{\dag}\qw_i^{MRT}|$, then
\bea
G_{i,j}&=&|\mathbf{h}_{i}^\dag \mathbf{w}_j^{\text{hyb}}|^{2}\notag\\
&=&\begin{cases}
y_{j}Q_{ij}^2, & i\neq j\\
|\sqrt{x_{i}}q_i + \sqrt{y_{i}}  {Q_{ii}}|^2 = x_i q_i^2 + y_i Q_{ii}^2 + 2 \sqrt{x_i y_i}r_i, & i=j.\notag
\end{cases}\label{eq:chGainsMRTZF1}
\eea

The transmit power intended for the $k$-th receiver is
\bea
    p_k& =& | \sqrt{x_k}\qw_k^{ZF} + \sqrt{y_k} \qw_k^{MRT} |^2\notag\\
    &=&x_k \|\qw_k^{ZF}\|^2 + y_k \|\qw_k^{MRT} \|^2 + 2\sqrt{x_k y_k} \mbox{Re}({\qw_k^{ZF}}^\dag\qw_k^{MRT})\notag\\
    &=& x_k q_k + y_k e_k + 2\sqrt{x_k y_k} f_k,
\eea
where $\|\qw_k^{ZF}\|^2 =q_k$,  and we have defined $e_k\triangleq\|\qw_k^{MRT} \|^2$ and
$f_k = \mbox{Re}({\qw_k^{ZF}}^\dag\qw_k^{MRT})$.

The $l$-th SAR  power can be reformulated as
\bea
    &&\sum_{k=1}^K \qw_k^\dag \qA_{l} \qw_k \notag\\
    &=& \sum_{k=1}^K (\sqrt{x_k} \qw_k^{ZF}  + \sqrt{y_k} \qw_k^{MRT} )^\dag \qA_{l} (\sqrt{x_k} \qw_k^{ZF}  + \sqrt{y_k} \qw_k^{MRT} )\notag\\
    &=&  \sum_{k=1}^K x_k A_{kl} + y_k B_{kl} + 2\sqrt{x_k y_k} C_{kl},
\eea
where we have defined $A_{k,l} \triangleq {\qw_k^{ZF}}^\dag \qA_l\qw_k^{ZF},$ $B_{ij} \triangleq {\qw_k^{MRT}}^\dag \qA_l\qw_k^{MRT},$ and
 $C_{kl} \triangleq   \mbox{Re}({\qw_k^{ZF}}^\dag \qA_l \qw_k^{MRT})$.

Then the power minimization problem \textbf{P2} becomes
\bea
    \textbf{P8:} && \min_{\{x_k,y_k, \rho_k\}} ~ \sum_{k=1}^K  x_k q_k + y_k e_k + 2\sqrt{x_k y_k} f_k\\
    \mbox{s.t.} && \frac{ x_k q_k^2 + y_k Q_{kk}^2 + 2 \sqrt{x_k y_k} r_k }{\sum\limits_{j=1,j \ne
k}^K  y_{j}Q_{kj}^2 + N_0 + \frac{N_C}{\rho_k}} \ge \gamma_k,\notag\\
&& (1-\rho_k) \left(x_k q_k^2   + 2 \sqrt{x_k y_k} r_k+ \sum\limits_{j=1}^K  y_{j}Q_{kj}^2 + N_0  \right)\ge \lambda_k, \notag\\
&& 0\le \rho_k\le 1, \forall k, \notag \\
                && \sum_{k=1}^K  x_k A_{kl} + y_k B_{kl} + 2\sqrt{x_k y_k} C_{kl} \le P_l, \forall l.\notag
\eea

{{The problem \textbf{P8} is not convex because of the nonconvex term $\sqrt{x_k y_k}$ introduced by the EH and the SAR constraints,  which makes the problem more challenging to solve.}} To deal with this difficulty, we introduce $s_k = \sqrt{x_k y_k}$ and relax it to $s_k^2\le x_k y_k$ or $\|[2s_k; x_k-y_k]\|\le x_k+y_k$, then we have the following second-order cone programming (SOCP) problem formulation
\bea
  \textbf{P9:}  &&\min_{x_k,y_k, \rho_k, s_k} ~ \sum_{k=1}^K  x_k q_k + y_k e_k + 2 s_k f_k\\
    \mbox{s.t.} && \frac{ x_k q_k^2 + y_k Q_{kk}^2 + 2 s_k r_k }{\sum\limits_{j=1,j \ne
k}^K  y_{j}Q_{kj}^2 + N_0 + \frac{N_k}{\rho_k}} \ge \gamma_k,\notag\\
&& (1-\rho_k) \left(x_k q_k^2   + 2 s_k r_k+ \sum\limits_{j=1}^K  y_{j}Q_{kj}^2 + N_0  \right)\ge \lambda_k, \notag\\
&& \|[2s_k; x_k-y_k]\|\le x_k+y_k, \forall k, \notag\\
&& 0\le \rho_k\le 1, \forall k, \notag \\
                && \sum_{k=1}^K x_k A_{kl} + y_k B_{kl} + 2s_k C_{kl} \le P_l, \forall l.
\eea
The problem  \textbf{P9} is   convex   and can be optimally solved. We observe that in most cases,  $s_k = \sqrt{x_k y_k}$ holds true which means it is also the optimal solution to \textbf{P8}; otherwise, we will employ  \eqref{eq:BF:hybrid} to form the hybrid beamforming, and then use the  fixed beamforming scheme in Section IV.A to find the optimal power vector and power splitting   by solving  \textbf{P6}.

{{
\textit{\underline{Complexity Analysis}}:   The complexity of the  optimally solution to the general problem \textbf{P5} is dominated by the SDP constraints, and  according to \cite[6.6.3]{Nemirovski}, the complexity of the interior-point algorithm for solving \textbf{P5}  is $O\left(\sqrt{KN_t}\left(K^3N_t^2+K^2N_t^3\right)\right)$.  While the for solving  the fixed beamforming optimization problem \textbf{P6} and the hybrid beamforming problem  \textbf{P9}, their complexities are dominated by the linear and SOCP constraints, and can be expressed as $O\left(\sqrt{3K+L}(3K^3+LK^2)\right)$ \cite[6.6.1]{Nemirovski} and  $O\left(\sqrt{3K+L}(28K^3 + 4LK^2+4K^2+KL)\right)$ \cite[6.6.2]{Nemirovski}, respectively. It can be clearly seen that the suboptimal beamforming schemes   reduces the complexity significantly when compared to the optimal beamforming.
}}

\section{Robust beamforming schemes}
In this section, we study the robust beamforming design, when the channel information and the SAR matrices are imperfectly known due to estimation and measurement errors. Without a robust solution, the SINR constraints and the EH constraints may not be satisfied; in addition, the SAR constraints may be violated.

We model the $k$-th receiver's actual channel as
\be
    \qh_k =\hat \qh_k + \Delta \qh_k, ~~\mbox{with}~ \Delta \qh_k\in \mathcal{H}_k,
\ee
where   $\hat\qh_k$ denotes the CSI estimate  known to the BS.  $\Delta \qh_k$ represents the CSI uncertainty that belongs to the set below
\be
    \mathcal{H}_k \triangleq  \{\Delta \qh_k |    \|\Delta \qh_k\| \le \sigma_k^2\},
\ee
where $\sigma_k^2$ denotes the error bound.

Similarly, the $l$-th SAR matrix is modelled as

\be
    \qA_l =\hat \qA_l + \Delta \qA_l,  ~~\mbox{with}~ \Delta \qA_l\in \mathcal{A}_l,
\ee
where $\hat \qA_l$ is know to the BS. $\Delta \qA_l$ is the SAR uncertainty within the set below
\be
    \mathcal{A}_l \triangleq  \{\Delta \qA_l|   \|\Delta \qA_l \|_F\le \tau_l\},
\ee
 where $\tau_l$ denotes the error bound.

 We assume that the BS has no knowledge about the error channel vectors or the error SAR matrices, except for their error bounds $\sigma^2_k$ and   $\tau_l$. Thus we take  a worst-case approach for the beamforming design to guarantee the resulting
 solution is  robust to all possible channel and SAR uncertainties within the given error sets. The specific robust design problem is to minimize the overall transmit
power $P_T$ for ensuring the receivers' worst-case individual SINR, EH and SAR constraints,  i.e.,
\bea\label{eqn:prob:2}
\textbf{P10:}&& \min_{\{\qW_k, \qw_k, \rho_k\}}~\sum_{k=1}^K  \tr(\qW_k) \\
\mbox{s.t.}&&
    \tr\left(\qh_{k}\qh_{k}^\dag(\qW_k-\gamma_k\sum\limits_{j=1, j\ne k}^K \qW_j) \right)    \ge  \gamma_k\left(N_0 + \frac{N_k}{\rho_k} \right),\notag\\
    &&\forall k,  \Delta \qh_k \in  \mathcal{H}_k,\notag\\
&&   \tr\left(\qh_{k}\qh_{k}^\dag(\sum\limits_{j=1}^K  \qW_j)\right) \ge \frac{\lambda_k}{1-\rho_k}-N_0, \forall k,  \Delta \qh_k \in  \mathcal{H}_k,\notag\\
&&  \sum_{k=1}^K \tr( \qA_l \qW_k) \leq   P_{l} , \forall l, \Delta \qA_l\in \mathcal{A}_l,\notag\\
&& 0\le \rho_k\le 1, \qW_k\succeq \qzero, \forall k, \notag\\
&& \qW_k = \qw_k\qw_k^\dag, \forall k.\label{eq:rank}
\eea

We first ignore the rank constraint \eqref{eq:rank}
 and define $\qQ_k=\qW_k-\gamma_k\sum\limits_{j=1, j\ne k}^K \qW_j$. Then $\textbf{P10}$ becomes
\bea\label{eqn:prob:3}
\textbf{P11:} &&\min_{\{\qW_k, \rho_k\}}~ \sum_{k=1}^K  \tr(\qW_k) \\
\mbox{s.t.}&&
        \hat\qh_{k}^\dag\qQ_k\hat\qh_{k} +         \hat\qh_{k}^\dag\qQ_k\Delta\qh_{k}+        \Delta\qh_{k}^\dag\qQ_k\hat\qh_{k}+   \Delta\qh_{k}^\dag\Delta\qh_{k}
        \ge \notag \\ && \gamma_k\left(N_0 + \frac{N_k}{\rho_k} \right),\forall k,  \Delta \qh_k \in  \mathcal{H}_k,\label{eqn:delta:h:SINR} \\
&&   \hat\qh_{k}^\dag(\sum\limits_{j=1}^K  \qW_j)\hat\qh_{k} +         \hat\qh_{k}^\dag(\sum\limits_{j=1}^K  \qW_j)\Delta\qh_{k}+        \Delta\qh_{k}^\dag(\sum\limits_{j=1}^K  \qW_j)\hat\qh_{k} \notag \\
 &&   +\Delta\qh_{k}^\dag\Delta\qh_{k}         \ge \frac{\lambda_k}{1-\rho_k}-N_0, \forall k,  \Delta \qh_k \in  \mathcal{H}_k,\label{eqn:delta:h:EH} \\
&&  \tr( (\hat \qA_l + \Delta \qA_l) (\sum_{j=1}^K  \qW_k) ) \leq   P_{l} , \forall l, \Delta \qA_l\in \mathcal{A}_l \label{eqn:delta:Radiation}\\
&& 0\le \rho_k\le 1, \qW_k\succeq \qzero, \forall k.\notag
\eea

The constraint in \eqref{eqn:delta:h:SINR} is in general difficult to handle because the robust beamforming needs to satisfy infinite  number of channel realizations defined in the set $\mathcal{H}_k$. Fortunately, thanks to S-Procedure, it  can be equivalently written as the following SDP constraint \cite{GanRobust}
\be\label{eq:robust:SINR}
    \left[\begin{array}{ccc}
     \hat\qh_k^\dag \qQ_k \qh_k^\dag -  \gamma_k\left(N_0 + \frac{N_k}{\rho_k} \right) - u_k \sigma_k^2  & \qh_k^\dag\qQ_k     \\
      \qQ_k \qh_k & \qQ_k + u_k\qI
    \end{array}\right]\succeq \qzero, \exists u_k\ge 0.
\ee

Following the same principle, the constraint in \eqref{eqn:delta:h:EH} can be equivalently written as
\bea\label{eq:robust:EH}
    \left[\begin{array}{ccc}
     \hat\qh_k^\dag (\sum\limits_{j=1}^K  \qW_j) \qh_k^\dag -  \frac{\lambda_k}{1-\rho_k}+ N_0 - v_k \sigma_k^2 & \qh_k^\dag(\sum\limits_{j=1}^K  \qW_j)    \\
     (\sum\limits_{j=1}^K  \qW_j) \qh_k & (\sum\limits_{j=1}^K  \qW_j) + v_k\qI
    \end{array}\right]\\
    \succeq \qzero, \exists v_k \ge 0.
\eea

{{The worst-case SAR constraint \eqref{eqn:delta:Radiation} introduces further difficulty, and next we convert it to an equivalent convex constraint, and the result is summarized in the theorem below.}}
\begin{theorem}\label{theo3}
The worst-case SAR constraint \eqref{eqn:delta:Radiation} is equivalent to the constraint
\be
    \tr( \hat \qA_l (\sum_{j=1}^K  \qW_k) )+ \tau_l \|(\sum_{j=1}^K  \qW_k) \|_F  \leq   P_{l}, \forall l.\label{eq:SAR:worst}
\ee
\end{theorem}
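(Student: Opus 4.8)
The plan is to recognize that the semi-infinite constraint \eqref{eqn:delta:Radiation}, which is required to hold for every admissible perturbation, is equivalent to a single worst-case (maximization) constraint. Writing $\qZ \triangleq \sum_{k=1}^K \qW_k$ for the total transmit covariance, the robust requirement that $\tr\big((\hat\qA_l + \Delta\qA_l)\qZ\big) \le P_l$ for all $\Delta\qA_l \in \mathcal{A}_l$ holds if and only if
\[
\max_{\|\Delta\qA_l\|_F \le \tau_l} \tr\big((\hat\qA_l + \Delta\qA_l)\qZ\big) \le P_{l}.
\]
Since $\tr(\hat\qA_l\qZ)$ does not depend on the perturbation, the whole problem reduces to evaluating the inner maximum $\max_{\|\Delta\qA_l\|_F \le \tau_l}\tr(\Delta\qA_l \qZ)$.

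First I would compute this inner maximum using the Cauchy--Schwarz inequality for the Frobenius inner product $\langle \qX,\qY\rangle = \tr(\qX^\dag\qY)$. Because $\qZ = \sum_{k}\qW_k \succeq \qzero$ is Hermitian, and each admissible perturbation is Hermitian so that $\hat\qA_l + \Delta\qA_l$ remains a valid conjugate-symmetric SAR matrix, the quantity $\tr(\Delta\qA_l\qZ)$ is real and coincides with $\langle \Delta\qA_l, \qZ\rangle$. Cauchy--Schwarz then yields $\tr(\Delta\qA_l\qZ) \le \|\Delta\qA_l\|_F\,\|\qZ\|_F \le \tau_l\|\qZ\|_F$, giving an upper bound of the required form.

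Next I would verify that this bound is attained, which is precisely what upgrades a sufficient condition to an equivalence. Assuming $\qZ \ne \qzero$ (otherwise the constraint is trivially $0\le P_l$), the choice $\Delta\qA_l^\star = \tau_l\,\qZ/\|\qZ\|_F$ belongs to $\mathcal{A}_l$: it is Hermitian because $\qZ$ is, and its Frobenius norm is exactly $\tau_l$. It achieves $\tr(\Delta\qA_l^\star\qZ) = \tau_l\|\qZ\|_F$, so the inequality above is tight. Consequently
\[
\max_{\|\Delta\qA_l\|_F \le \tau_l} \tr\big((\hat\qA_l + \Delta\qA_l)\qZ\big) = \tr(\hat\qA_l\qZ) + \tau_l\|\qZ\|_F,
\]
and substituting this into the worst-case constraint reproduces \eqref{eq:SAR:worst} exactly.

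The argument is essentially a dual-norm computation, so the routine steps (Cauchy--Schwarz and the explicit maximizer) are straightforward; the only point requiring care is ensuring the worst-case perturbation is \emph{admissible}. Specifically, the maximizer must be Hermitian so that $\tr(\Delta\qA_l^\star\qZ)$ is real and the perturbed matrix stays a legitimate SAR matrix. Since the optimal $\Delta\qA_l^\star$ turns out to be a positive scalar multiple of the Hermitian PSD matrix $\qZ$, admissibility holds automatically, and the worst-case value is unaffected even if one had allowed general (non-Hermitian) perturbations, in which case one would simply replace the trace by its real part. This is the subtle step to state explicitly; the remainder is direct algebra.
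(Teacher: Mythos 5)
Your proposal is correct and follows essentially the same route as the paper's Appendix B: both reduce the semi-infinite constraint to the inner maximization $\max_{\|\Delta\qA_l\|_F\le\tau_l}\tr(\Delta\qA_l\,\qZ)$ with $\qZ=\sum_k\qW_k$, identify the worst-case perturbation as $\tau_l\,\qZ/\|\qZ\|_F$, and conclude the maximum equals $\tau_l\|\qZ\|_F$. The only difference is cosmetic --- you obtain the maximizer via Cauchy--Schwarz on the Frobenius inner product rather than via the Lagrangian stationarity condition used in the paper, and your explicit check that the maximizer is Hermitian (hence admissible) is a point the paper leaves implicit.
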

The proof is given in Appendix B.

With the  robust constraints \eqref{eq:robust:SINR}, \eqref{eq:robust:EH} and \eqref{eq:SAR:worst}, we can obtain the equivalent robust beamforming optimization problem as follows
{\small
\bea\label{eqn:prob:4}
&&\textbf{P12:}~ \min_{\{\qW_k, \rho_k, u_k, v_k\}}~\sum_{k=1}^K  \tr(\qW_k) \\
&&\mbox{s.t.}~
           \left[\begin{array}{ccc}
     \hat\qh_k^\dag \qQ_k \qh_k^\dag -  \gamma_k\left(N_0 + \frac{N_k}{\rho_k} \right) - u_k \sigma_k^2  & \qh_k^\dag\qQ_k     \\
      \qQ_k \qh_k & \qQ_k + u_k\qI
    \end{array}\right]\notag\\
    && \succeq \qzero, \exists u_k,\forall k, \label{eq:robust:SINR2}\\
&&      \left[\begin{array}{ccc}
     \hat\qh_k^\dag (\sum\limits_{j=1}^K  \qW_j) \qh_k^\dag -  \frac{\lambda_k}{1-\rho_k}+ N_0 - v_k \sigma_k^2 & \qh_k^\dag(\sum\limits_{j=1}^K  \qW_j)  \notag  \\
     (\sum\limits_{j=1}^K  \qW_j) \qh_k & (\sum\limits_{j=1}^K  \qW_j) + v_k\qI
    \end{array}\right]\\
    && \succeq \qzero, \exists v_k, \forall k, \label{eq:robust:EH2}\\
&&  \tr( \hat \qA_l (\sum_{j=1}^K  \qW_k) )+ \tau_l \|(\sum_{j=1}^K  \qW_k) \|_F  \leq   P_{l}, \forall l,\notag\\
&& 0\le \rho_k\le 1, \qW_k\succeq \qzero, \forall k,\notag
\eea}
where we have ignored the rank constraint on $\qW_k$.

However, \textbf{P12} is  still a nonconvex problem because of the nonlinear terms about $\rho_k$ in the constraints \eqref{eq:robust:SINR2} and \eqref{eq:robust:EH2}. To tackle this new challenge,  we introduce the auxiliary variables $\{m_k, n_k\}$ and use the following formulation to convert it to a convex problem below
{\small \bea\label{eqn:prob:5}
&& \textbf{P13:}\min_{\{\qW_k, \rho_k, u_k, v_k, m_k, n_k\}}~\sum_{k=1}^K  \tr(\qW_k) \\
&& \mbox{s.t.}~
           \left[\begin{array}{ccc}
     \hat\qh_k^\dag \qQ_k \qh_k^\dag -  \gamma_k\left(N_0 + {N_C} m_k \right) - u_k \sigma_k^2  & \qh_k^\dag\qQ_k     \\
      \qQ_k \qh_k & \qQ_k + u_k\qI
    \end{array}\right]\notag\\
    && \succeq \qzero, \exists u_k, \forall k,\notag\\
&&      \left[\begin{array}{ccc}
     \hat\qh_k^\dag (\sum\limits_{j=1}^K  \qW_j) \qh_k^\dag -  \lambda_k n_k+ N_0 - v_k \sigma_k^2 & \qh_k^\dag(\sum\limits_{j=1}^K  \qW_j)    \\
     (\sum\limits_{j=1}^K  \qW_j) \qh_k & (\sum\limits_{j=1}^K  \qW_j) + v_k\qI
    \end{array}\right]\notag\\
    && \succeq \qzero, \exists v_k, \forall k,\notag\\
&& (1+ \frac{\tau_l }{\|\hat \qA_l \|_F}) \tr( \hat \qA_l (\sum_{j=1}^K  \qW_k) ) \leq   P_{l},  \notag\\
&& m_k\ge \frac{1}{\rho_k}, n_k \ge \frac{1}{1-\rho_k}, \forall k, \label{eq:mn} \ \\
&& 0\le \rho_k\le 1, \qW_k\succeq \qzero, u_k\ge 0, v_k\ge 0,\forall k.\notag
\eea}

Next we prove that the constraint \eqref{eq:mn} is tight at the optimum, therefore \textbf{P13} is equivalent to \textbf{P12}.
\begin{theorem}
There exists optimal solutions $\{m_k^*, n_k^*, \rho_k^*\}$ to \textbf{P13} such that  $m_k^*= \frac{1}{\rho_k^*}$ and  $n_k = \frac{1}{1-\rho_k^*}, \forall k$.
\end{theorem}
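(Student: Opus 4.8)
The plan is to prove the claim by a simple exchange argument that exploits the monotonicity of the two linear matrix inequalities (LMIs) in \textbf{P13} with respect to the auxiliary variables. First I would fix any optimal tuple $(\{\qW_k^*\}, \{\rho_k^*\}, \{u_k^*\}, \{v_k^*\}, \{m_k^*\}, \{n_k^*\})$ of \textbf{P13} and observe two structural facts: (i) the objective $\sum_{k=1}^K \tr(\qW_k)$ and all constraints other than the first LMI are independent of $m_k$, and likewise all constraints other than the second LMI are independent of $n_k$; and (ii) the feasibility of \textbf{P13} forces $\rho_k^*\in(0,1)$ for every $k$ (otherwise $\frac{1}{\rho_k}$ or $\frac{1}{1-\rho_k}$ in \eqref{eq:mn} would be unbounded), so the target values $\frac{1}{\rho_k^*}$ and $\frac{1}{1-\rho_k^*}$ are finite and well defined.

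Next, arguing by contradiction, suppose that for some index $k$ the bound is slack, say $m_k^* > \frac{1}{\rho_k^*}$ (the treatment of a slack $n_k^*$ is identical). I would then construct a new point that is identical to the optimal one except that $m_k$ is lowered to $\bar m_k \triangleq \frac{1}{\rho_k^*}$. The crucial observation is that $m_k$ enters the first LMI only through the scalar top-left entry with coefficient $-\gamma_k N_C$, so replacing $m_k^*$ by $\bar m_k$ adds the nonnegative quantity $\gamma_k N_C(m_k^*-\bar m_k)$ to that entry. Equivalently, writing $\qe\triangleq[1,0,\ldots,0]^T$, the LMI is perturbed by the positive-semidefinite rank-one matrix $\gamma_k N_C(m_k^*-\bar m_k)\,\qe\qe^\dag\succeq\qzero$. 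Since the sum of an already feasible (positive-semidefinite) LMI and a positive-semidefinite matrix remains positive semidefinite, the first constraint is still satisfied; the bound $m_k\ge\frac{1}{\rho_k}$ now holds with equality, and every other constraint as well as the objective value are unchanged. Hence the modified point is again optimal but with $m_k=\frac{1}{\rho_k^*}$. Applying this reduction coordinate by coordinate to every slack $m_k$, and the entirely analogous reduction to every slack $n_k$ (which enters the second LMI through the top-left entry with coefficient $-\lambda_k$), produces an optimal solution satisfying $m_k^*=\frac{1}{\rho_k^*}$ and $n_k^*=\frac{1}{1-\rho_k^*}$ for all $k$. This shows that \eqref{eq:mn} is tight at this optimum and therefore that \textbf{P13} is equivalent to \textbf{P12}.

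The hard part is the monotonicity/positive-semidefiniteness step: one must verify that lowering an auxiliary variable perturbs the LMI by a positive-semidefinite matrix and can therefore only preserve (never destroy) feasibility, while simultaneously confirming that no other constraint---in particular the second LMI and the SAR constraint---couples to $m_k$, so that the exchange is genuinely local and leaves the objective untouched. Everything else is routine bookkeeping.
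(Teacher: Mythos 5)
Your proposal is correct and follows essentially the same route as the paper: an exchange argument showing that any slack in \eqref{eq:mn} can be removed by decreasing $m_k$ (resp. $n_k$), which only adds the nonnegative quantity $\gamma_k N_C(m_k^*-\bar m_k)$ (resp. $\lambda_k(n_k^*-\bar n_k)$) to the top-left entry of the corresponding LMI and hence preserves feasibility of all constraints while leaving the objective unchanged. The paper states this more tersely ("we can always reduce $m_k^*$ and/or $n_k^*$ ... without affecting the feasibility of other constraints"), whereas you make explicit the positive-semidefinite rank-one perturbation and the decoupling of $m_k$, $n_k$ from the other constraints, which is a welcome but not substantively different elaboration.
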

\begin{proof}
This can be proved by noting that given a solution  $\{m_k^*, n_k^*, \rho_k^*\}$, if any constraint  \eqref{eq:mn} is not tight, we can always reduce  $m_k^*$ and/or $n_k^*$ to make inequality constraints to be equalities without affecting the feasibility of other constraints. Therefore there always exists $\{m_k^*, n_k^*, \rho_k^*\}$ such that the constraint  \eqref{eq:mn} is tight. This completes the proof.
\end{proof}

Theorem 4 shows that introducing  the auxiliary variables $\{m_k, n_k\}$ in  \textbf{P13} to deal with the nonconvex constraints in  \textbf{P12} incurs no loss of optimality. However, since we have used SDP with relaxation in \textbf{P13}, only when the relaxation is tight, i.e., $\mbox{rand}(\qW_k)=1, \forall k$,   the optimal robust beamforming solution ${\qw_k}$  can be recovered from $\qW_k$; otherwise, the obtained solution provides a lower bound of the required transmit power. A feasible beamforming solution can be obtained by the standard randomization technique \cite{SDP}.

\section{Simulation Results}
Simulations   are carried out to evaluate the performance  of the proposed algorithms. We consider a  MISO downlink consisting of $K$ receivers randomly located around the BS with distance $l_k$ and direction $\zeta_k$ drawn from the uniform distribution, $l_k\sim  U(1,5)$m and $\zeta_i \sim  U(-\pi,\pi)$. Each receiver can harvest energy at frequency $f=915$ MHz and it is assumed that the antenna gains  at the BS and receivers are 8dBi and 3dBi, respectively. The path attenuation of receiver $k$, $L_k$, is obtained using the Friis equation with reference distance 1m and  path loss coefficient 2.5.  Because of the short distance between the BS and the receivers and dominance of the line-of-sight (LOS) signal, the Rician fading is used to model the   channel. Hence, $\mathbf{h}_k$ is composed of the  LOS  signal, $\mathbf{h}^{LOS}_k$ and the non-LOS signal  $\mathbf{h}^{NLOS}_k$ according to the expression below \cite{Zhang-14}
\begin{align}
	\mathbf{h}_k &= \sqrt{\frac{R}{1+R}} \mathbf{h}^{LOS}_k + \sqrt{\frac{1}{1+R}} \mathbf{h}^{NLOS}_k,
\end{align}
where $R=5$ dB is the Rician factor. For the LOS signal the far-field uniform linear antenna array model with $\lambda/2$ distance between antenna elements is considered \cite{karip2007} which implies that $\mathbf{h}^{LOS}_k = \sqrt{L_k}[1, \text{e}^{-j(1\pi\sin\zeta_k)},...,\text{e}^{-j((N-1)\pi\sin\zeta_k)}]^T$. Rayleigh fading is adopted for the NLOS signal, $\mathbf{h}^{NLOS}_k\in\mathbb{C}^{N_t\times 1}$  which means that each of its elements is a circularly symmetric complex Gaussian (CSCG) random variable with zero mean and variance $L_k$.

Unless otherwise specified, it is further assumed a BS with four antennas serving four receivers, i.e., $K=N_t=4$, $N_0=-70$ dBm and $N_C=-50$ dBm, while the  SINR and EH thresholds are the same for all receivers, i.e. $\bar\Gamma_k=\bar\Gamma=10$ dB, $\bar\lambda_k=\bar\lambda=-15$ dBm, $\forall k$. we assume that the total power constraint is $P_T=2$ W, and the SAR power constraint is $P_l=P=1.6$ W/kg, $\forall l$. {{There is no guarantee that the diode (rectification circuit) operates always in the transition regime, and the system could operate in the close to saturation regime, so it is important to capture saturation effects. }} We use the nonlinear energy harvesting model below proposed in \cite{Yunfei-17}
\be
    F_k(x)=\frac{\bar ax+\bar b}{x+c} - \frac{\bar b}{c}, \forall k, \label{eqn:nonlinear}
\ee
and the fitted parameters of the proposed model are $\bar a = 2.463, \bar b = 1.635, c = 0.826$ using the data in \cite{EH-data}. {\eqref{eqn:nonlinear}    belongs to the category of models which are based on real-world measurements by adjusting the parameters of a non-linear function  through curve fitting tools. It can model both saturation and non-saturation regimes, and has been widely used in the literature for the design of WPT/SWIPT systems such as\cite{Letaief-18,Kim-18,Tran-18}.}
{{Note that in general $F(\cdot)$ should be a function of the received signal rather than just its power, so {\eqref{eqn:nonlinear} is a simplified model  based on specific measurements under specific conditions and with continuous wave input signals.}} The SAR matrix is given by

\be
    \qA = \left[
            \begin{array}{cccc}
              1.6 & -1.2j & -0.42 & 0 \\
              1.2j & 1.6 & -1.2j & -0.42 \\
              -0.42  &  1.2j  & 1.6 & -1.2j \\
              0 & -0.42 & 1.2j &1.6 \\
            \end{array}
          \right],
\ee
for four antennas at the BS.

We use the minimum achievable SINR and EH ratio $t$ as the main performance evaluation criterion. The proposed optimal solutions, fixed beamforming solutions including ZF and RZF beamforming schemes as well as the hybrid beamforming scheme will be compared. In addition, the conventional backoff scheme is used as another performance benchmark described as follows. Firstly, it solves \textbf{P1} without the SAR constraint, i.e., the problem below
\bea\label{eqn:prob:01}
\textbf{P14:} \max_{\{\qw_k, \rho_k,t\}}&& t \\
\mbox{s.t.}&&
  \Gamma_m = \frac{ |\qh_{k}^\dag\qw_k|^2}{\sum\limits_{j=1,j \ne
k}^K  |\qh_{k}^\dag\qw_j|^2 + N_0 + \frac{N_k}{\rho_k}} \ge t \bar \gamma_k\triangleq \gamma_k,\notag\\
&& EH\left((1-\rho_k) \left(\sum\limits_{j=1}^K  |\qh_{i}^\dag\qw_k|^2 + N_0  \right)\right) \ge t \bar\lambda_k, \notag\\
&& 0\le \rho_k\le 1, \forall k, \notag \\
&& \sum_{k=1}^K \|\qw_k\|^2 \le P_T.
\eea
Suppose its solution is $\qw$. Define $\delta_l \triangleq \frac{\sum_{k=1}^K \qw_k^\dag \qA_{l} \qw_k}{P_{l}}$, then the backoff solution is given by  $\qw = \frac{\qw}{\min(1,\max_l \delta_l)}$ to satisfy the SAR   constraints.

 We first evaluate the performance and complexity of the fast algorithm for a single receiver system with a single SAR constraint ($K=L=1$) as the SAR limit varies in  Fig. \ref{fig:SU}. From Fig. \ref{fig:SU} (a), we can see that the minimum achievable  SINR and EH ratio of the proposed fast algorithm is identical to that of the SDP approach for systems with general number of receivers, which verifies its optimality. Fig. \ref{fig:SU} (b) shows that the computation time of the proposed fast algorithm is more than three orders of magnitude lower than the SDP approach. These results clearly demonstrate the advantage of the proposed fast algorithm.

Figs. \ref{fig:vs:radiation}, \ref{fig:vs:EH} and \ref{fig:vs:txpower}  depict the minimum achievable  SINR and EH ratio
by the different investigated algorithms by varying $P$, $\lambda$ and $P_T$, respectively.  Fig. \ref{fig:vs:radiation} (a) depicts that the proposed optimal solution can achieve substantial gain over the traditional backoff solution, especially when the SAR power constraint is low. As the SAR power constraint increases, the performance of the optimal solution is close to that without the SAR power constraint. The hybrid beamforming solution is superior to the two fixed beamforming solutions, while ZF beamforming performs much better than the RZF beamforming. Fig. \ref{fig:vs:radiation} (b) shows the feasibility of various schemes, which follows the similar trend as the results in  Fig. \ref{fig:vs:radiation} (a). In Fig. \ref{fig:vs:radiation} (c), we plot the actual transmit power consumed. It can be seen that without the SAR constraint, the total power budget of 2W is always used. The ZF beamforming solution uses similar power as the backoff solution, while  the hybrid beamforming scheme uses much less power, thus provides a higher power efficiency.

In Fig. \ref{fig:vs:EH}, we plot the minimum achievable  SINR and EH ratio  against the EH constraint. It is seen that the proposed optimal solution outperforms other schemes significantly. As expected, as the required EH increases, the performance  of all schemes degrades, and eventually converges to the similar result. The hybrid beamforming scheme achieves similar performance as the backoff scheme, but with much reduced complexity.

In Fig. \ref{fig:vs:txpower}, we examine the impact of the total transmit power on the minimum achievable  SINR and EH ratio.  It is observed that the performance of all schemes is increasing or non-decreasing as the total transmit power increases and saturates when the transmit power is greater than 34 dBm, except for the backoff solution. The reason is that although the transmit power increases, the SAR constraint becomes the bottleneck, so the increased power cannot be utilized and transferred to improved performance.  The backoff solution is penalized most in the high power regime, therefore its performance deteriorates quickly as the transmit power goes above 31 dBm.

Next, we investigate the impact of the channel and SAR matrix estimation error.  Fig. \ref{fig:vs:robust:average} depicts the minimum achievable  SINR and EH ratio of different schemes when the estimation error $\sigma^2_k=\tau_l, \forall k,l$ varies. The performance degradation is clearly seen as the estimation error grows. The performance of the proposed robust solution is still satisfactory when the estimation error is small, and it can achieve much higher     SINR and EH   than the non-robust solution that ignores the estimation error.   Fig. \ref{fig:vs:robust:cdf} (a) and (b)   further  depict the cumulative distribution functions (CDFs) of the output SINR and EH, respectively,  when there are estimation errors given by $\sigma^2_k=5\times 10^{-8}, \forall k$ and $\tau_l=7\times 10^{-8}, \forall l$. The vertical lines show the target SINR of 10 dB and the target EH of -15 dBm.  It is clearly seen that our proposed robust solution can guarantee the worst-case SINR and EH constraints are met even in the presence of estimation error. However, for the non-robust solution, both constraints are violated with probabilities higher than 90\%.

     \begin{figure}[]
  \centering
  \subfigure[]{
       \includegraphics[width=0.35\textwidth]{./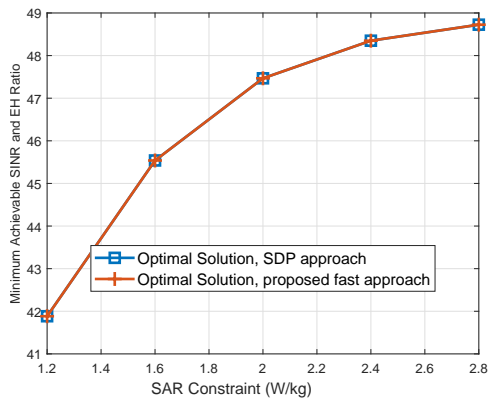}
     }
  \subfigure[]{
     \includegraphics[width=0.35\textwidth]{./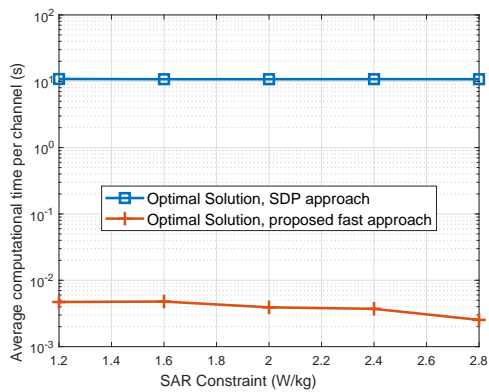}
    }

  \caption{Single-receiver performance: (a) the  minimum achievable SINR and EH ratio   and (b) the computational time.}
  \label{fig:SU}
\end{figure}

    \begin{figure}[]
  \centering
       \subfigure[]{  \includegraphics[width=0.35\textwidth]{./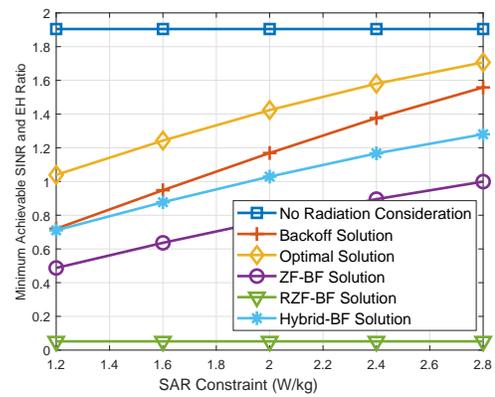}}
          \subfigure[]{\includegraphics[width=0.35\textwidth]{./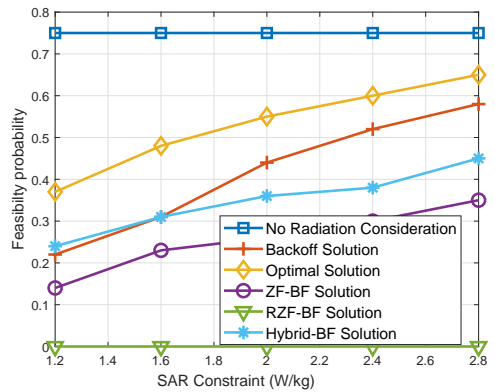}}
            \subfigure[]{\includegraphics[width=0.35\textwidth]{./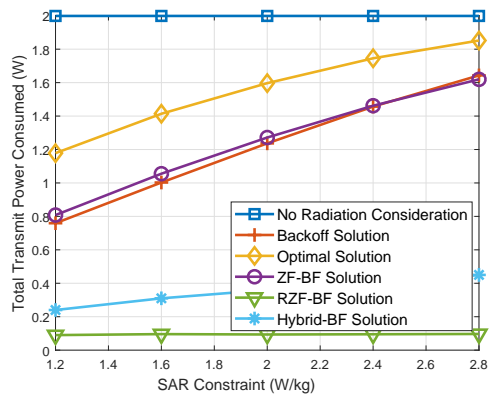}
    }
   \caption{The minimum achievable  SINR and EH ratio  vs SAR: (a) the minimum achievable  SINR and EH ratio, (b) the feasibility probability and c) total transmit power consumed.}
  \label{fig:vs:radiation}
\end{figure}

    \begin{figure}[]
  \centering
       \includegraphics[width=0.35\textwidth]{./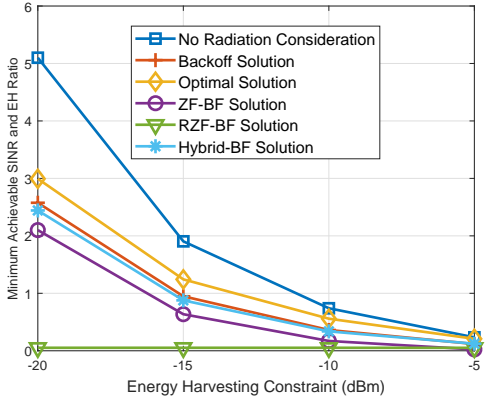}
   \caption{The minimum achievable  SINR and EH ratio vs the EH constraint.}
  \label{fig:vs:EH}
\end{figure}

    \begin{figure}[]
  \centering
      \includegraphics[width=0.35\textwidth]{./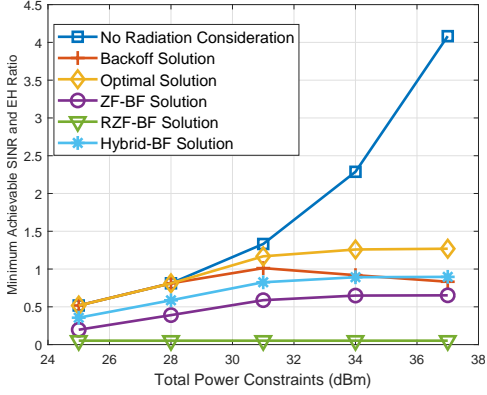}
   \caption{The minimum achievable  SINR and EH ratio vs the total transmit power.}
  \label{fig:vs:txpower}
\end{figure}

    \begin{figure}[]
  \centering
      \includegraphics[width=0.35\textwidth]{./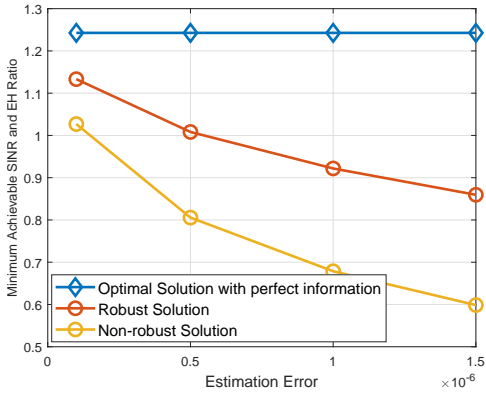}
   \caption{The minimum achievable  SINR and EH ratio vs the estimation error.}
  \label{fig:vs:robust:average}
\end{figure}

    \begin{figure}[]
  \centering
  \subfigure[]{
       \includegraphics[width=0.35\textwidth]{./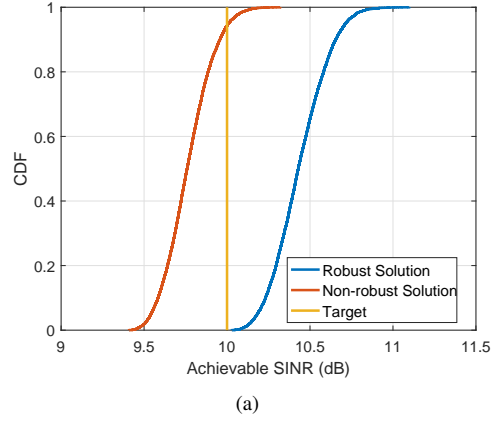}
     }
  \subfigure[]{
     \includegraphics[width=0.4\textwidth]{./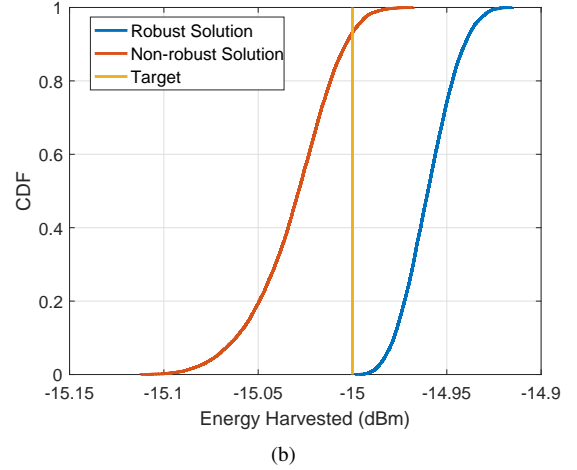}
    }
  \caption{The minimum achievable  SINR and EH ratio with estimation error: (a) received SINR   and (b) energy harvested.}
  \label{fig:vs:robust:cdf}
\end{figure}

\section{Conclusions}
In this paper, we have studied the optimization of SAR-constrained multiuser transmit beamforming of a  SWIPT system. In the general case with perfect information, we have shown that  the optimal beamforming and power splitting solutions can be obtained via semidefinite programming and bisection search; while a much more efficient solution can be found for the special single-receiver single-SAR case. We further designed low-complexity suboptimal solutions including the fixed beamforming and hybrid beamforming schemes. In addition, we proposed robust beamforming solutions to deal with the imperfect channel and SAR matrix information, while guaranteeing the required SINR, EH constraints and the maximum SAR is below the given threshold.  Our simulation and analysis have shown significant performance improvement of the proposed SAR-aware optimal solution  over the  conventional transmission scheme with simple power backoff. Future works include using large intelligence surface to further reduce the energy consumption of the transmitter while satisfying the SAR constraints.

\section*{Appendix A. Proof of Theorem \ref{theo2}}
\begin{proof}

The partial Lagrangian of the problem \textbf{P5} is
{\small
\bea
&& L(\{\qW_k, \rho_k,\alpha_k, \beta_k, \nu_l\}) \notag\\
 && = \sum_{k=1}^K  \tr(\qW_k)+ \sum_{l=1}^L \nu_l \left(\sum_{k=1}^K \tr( \qA_l \qW_k) - P_{l} \right) \notag\\
  && + \sum_{k=1}^K\alpha_k \left( \sum\limits_{j=1}^K  \tr(\qh_{k}\qh_{k}^\dag\qW_j) + N_0 + \frac{N_k}{\rho_k} - (1+\frac{1}{\gamma_k})  \tr(\qh_{k}\qh_{k}^\dag\qW_k)\right) \notag \\
   && + \sum_{k=1}^K\beta_k \left(\lambda_k - (1-\rho_k) \left(\sum\limits_{j=1}^K  \tr(\qh_{k}\qh_{k}^\dag\qW_j) + N_0  \right)\right)\notag\\
   && = \sum_{k=1}^K \tr(\qW_k \qX_k) - \sum_{l=1}^L \nu_l P_l  + \sum_{k=1}^K\alpha_k ( N_0 + \frac{N_i}{\rho_i} )\notag\\
    &&+ \sum_{k=1}^K\beta_k ( \lambda_k- (1-\rho_k)N_0),\notag
\eea}
where $\{\alpha_k, \beta_k, \nu_l\}$ are dual variables, and we have defined
{\small
\bea
    \qX_k   \triangleq \qI - \sum_{j=1}^K\beta_j(1-\rho_j)  \qh_{j}\qh_{j}^\dag +   \sum_{j=1}^K\alpha_j   \qh_{j}\qh_{j}^\dag
    - \alpha_k (1+\frac{1}{\gamma_k}) \qh_{k}\qh_{k}^\dag
    + \sum_{l=1}^L \nu_l \qA_l.
\eea}

So the dual problem is
{ \bea
    &&\max_{\bm\nu, \bm\alpha, \bm\beta\ge 0} ~ - \sum_{l=1}^L \nu_l P_l  + \sum_{k=1}^K\alpha_k ( N_0 + \frac{N_i}{\rho_i} )\notag\\
     &&+ \sum_{k=1}^K\beta_k ( \lambda_k- (1-\rho_k)N_0)\notag\\
   && \mbox{s.t.} ~ \qX_k   = \qI + \sum_{j=1}^K (\alpha_j-\beta_j(1-\rho_j))    \qh_{j}\qh_{j}^\dag \notag\\
    &&    - \alpha_k (1+\frac{1}{\gamma_k}) \qh_{k}\qh_{k}^\dag
    + \sum_{l=1}^L \nu_l \qA_l\succeq \qzero, \forall k.\notag
\eea}

Next we prove that the matrix $\qI + \sum_{j=1}^K (\alpha_j-\beta_j(1-\rho_j))    \qh_{j}\qh_{j}^\dag + \sum_{l=1}^L \nu_l \qA_l$ is full rank by contradiction. If it is not full-rank, suppose there exits a non-zero vector $\qx$ that satisfies $\qx^\dag(\qI + \sum_{j=1}^K (\alpha_j-\beta_j(1-\rho_j))    \qh_{j}\qh_{j}^\dag + \sum_{l=1}^L \nu_l \qA_l)\qx=0$.
Because $\qX_k\succeq \qzero$, we have
\bea
    \qx^\dag\qX_k\qx &=& \qx^\dag ( \qI + \sum_{j=1}^K (\alpha_j-\beta_j(1-\rho_j))    \qh_{j}\qh_{j}^\dag
    + \sum_{l=1}^L \nu_l \qA_l )\qx \notag\\
    && - \qx^\dag(\alpha_k (1+\frac{1}{\gamma_k}) \qh_{k}\qh_{k}^\dag) \qx\notag\\
    &=& -  \alpha_k (1+\frac{1}{\gamma_k}) |\qh_{k}^\dag\qx|^2   \ge 0.
\eea
Therefore it holds true that
\be
    \qh_{k}^\dag  \qx=0, \forall k.
\ee

It follows that
\bea
    \qx^\dag(\qI + \sum_{j=1}^K (\alpha_j-\beta_j(1-\rho_j))    \qh_{j}\qh_{j}^\dag + \sum_{l=1}^L \nu_l \qA_l)\qx\notag\\
     =     \qx^\dag(\qI +   \sum_{l=1}^L \nu_l \qA_l)\qx>0,
\eea
which contradicts the assumption that $\qx^\dag(\qI + \sum_{j=1}^K (\alpha_j-\beta_j(1-\rho_j))    \qh_{j}\qh_{j}^\dag + \sum_{l=1}^L \nu_l \qA_l)\qx=0$. Therefore the matrix $\qI + \sum_{j=1}^K (\alpha_j-\beta_j(1-\rho_j))    \qh_{j}\qh_{j}^\dag + \sum_{l=1}^L \nu_l \qA_l$ must be full rank, and the rank of $\qX_k$ is at least $N_t-1$.

One  Karush–Kuhn–Tucker (KKT) condition of the problem \textbf{P5} is that $\mbox{trace}(\qW_k \qX_k)=0$, so the rank of $\qW_k$ is at most 1. This completes the proof.

\end{proof}
\section*{Appendix B. Proof of Theorem \ref{theo3}}
\begin{proof}
We consider the following optimization problem:
\bea\label{eqn:robust:SAR}
\max_{\qX} && \tr(\bar \qW \qX) ~~~~\mbox{s.t.}~~ \|\qX\|_F\le \tau,
\eea
where $\qX$ has the meaning of the worst SAR error matrix. Its Lagrangian is given by
\bea
    L &=& -\tr(\bar \qW \qX) + u (\|\qX\|_F^2 -  \tau^2) \notag\\
    &=&  -\tr(\bar \qW \qX) + u (\tr(\qX\qX^\dag) -  \tau^2),
\eea
where $u$ is the dual variable. Setting its first-order derivative to be zero leads to:
\be
    2u \qX - \bar \qW=0.
\ee
Therefore $\qX$ should be a scaled version of the $\bar \qW$. It is easy to see that  the norm constraint in \eqref{eqn:robust:SAR} should be tight, so we can obtain the worst-case $\qX$ as $\qX^* = \frac{\bar \qW}{\|\bar \qW\|_F}\tau$, and $\max_\qX \tr(\bar \qW \qX) =  \tau\|\bar \qW\|_F$. Substituting it into \eqref{eqn:delta:Radiation} leads to the constraint \eqref{eq:SAR:worst}. This completes the proof.
\end{proof}

\vspace{-3mm}

\end{document}